\newtheorem{proposition}{Proposition}
\newtheorem{corollary}{Corollary}[proposition]
\newtheorem{definition}{Definition}
\theoremstyle{remark}
\newtheorem*{remark}{Remark}
\newcommand{\tp}{\mathsf{T}}
\renewcommand{\tilde}{\widetilde}
\begin{document}

\title{Integrated Cyber-Physical Resiliency for Power Grids under IoT-Enabled Dynamic Botnet Attacks}

\author{
\IEEEauthorblockN{Yuhan Zhao, Juntao Chen, and Quanyan Zhu}
\thanks{This work is partially supported by Grant ECCS-1847056 and Grant ECCS-2138956 from National Science Foundation (NSF).}
\thanks{Y. Zhao and Q. Zhu are with the Department of Electrical and Computer Engineering, Tandon School of Engineering, New York University, Brooklyn, NY 11201. Email: \{yhzhao, qz494\}@nyu.edu }
\thanks{J. Chen is with the Department of Computer and Information Sciences, Fordham University, New York, NY 10023. E-mail: jchen504@fordham.edu}
}


\maketitle

\begin{abstract}
The wide adoption of Internet of Things (IoT)-enabled energy devices improves the quality of life, but simultaneously, it enlarges the attack surface of the power grid system. The adversary can gain illegitimate control of a large number of these devices and use them as a means to compromise the physical grid operation, a mechanism known as the IoT botnet attack. This paper aims to improve the resiliency of cyber-physical power grids to such attacks. Specifically, we use an epidemic model to understand the dynamic botnet formation, which facilitates the assessment of the cyber layer vulnerability of the grid. The attacker aims to exploit this vulnerability to enable a successful physical compromise, while the system operator's goal is to ensure a normal operation of the grid by mitigating cyber risks. We develop a cross-layer game-theoretic framework for strategic decision-making to enhance cyber-physical grid resiliency. The cyber-layer game guides the system operator on how to defend against the botnet attacker as the first layer of defense, while the dynamic game strategy at the physical layer further counteracts the adversarial behavior in real time for improved physical resilience. A number of case studies on the IEEE-39 bus system are used to corroborate the devised approach.
\end{abstract}

\begin{IEEEkeywords}
Cyber-physical grid resilience, Botnet attacks, Dynamic games, Cross-layer defense
\end{IEEEkeywords}

\section{Introduction}

With the ubiquitous adoption of advanced information and communication technologies (ICTs), electric power systems have evolved as complex cyber-physical energy systems in which the functions of cyber and physical power components are tightly coupled during their operation. Among the vast ICTs adopted in smart power systems, one primary class is the Internet of Things (IoT) which plays an essential role in filling the gap between controlling and monitoring electricity services and physical processes. Another feature of the modern grid, in particular in the distribution system, is that massive IoT-controlled high-power energy devices are penetrated, such as air conditioners, water heaters, and electric ovens, and these devices can be controlled remotely by Internet connections.

The widespread adoption of IoT devices improves the quality of life. However, it exposes the grid to vast cyber threats, raising significant cybersecurity concerns \cite{zografopoulos2021cyber,chen2022cross}. The insecurity of these devices is partially due to the fact that cybersecurity is not a top concern when they are designed and manufactured. 
The presence of existing or zero-day vulnerabilities \cite{meneghello2019iot,alladi2020consumer} make IoT devices more susceptible to cyberattacks, such as DoS/DDoS attack \cite{kumari2023comprehensive} and botnet attack \cite{antonakakis2017understanding}.
In addition, their limited onboard computation capabilities render them incapable of running sophisticated encryption and authentication mechanisms, which makes them easy to hack. In this work, we consider the attacker deceptively manipulating massive IoT-controlled energy devices to compromise normal grid operation. This attack is termed an IoT botnet attack or load altering attack \cite{soltan2018blackiot,amini2016dynamic,mohsenian2011distributed,huang2019not}.  
It has been demonstrated that such IoT botnet attacks can lead to severe outcomes to the power grid, such as load shedding and generator tripping \cite{huang2019not}.
To launch the IoT botnet attack, the attacker needs to compromise a collection of IoT-controlled energy devices and alter their operation status (on/off) in a coordinated fashion, as illustrated in Fig. \ref{Figure_CPES}. 

\begin{figure}[!t]
  \centering
    \includegraphics[width=0.75\columnwidth]{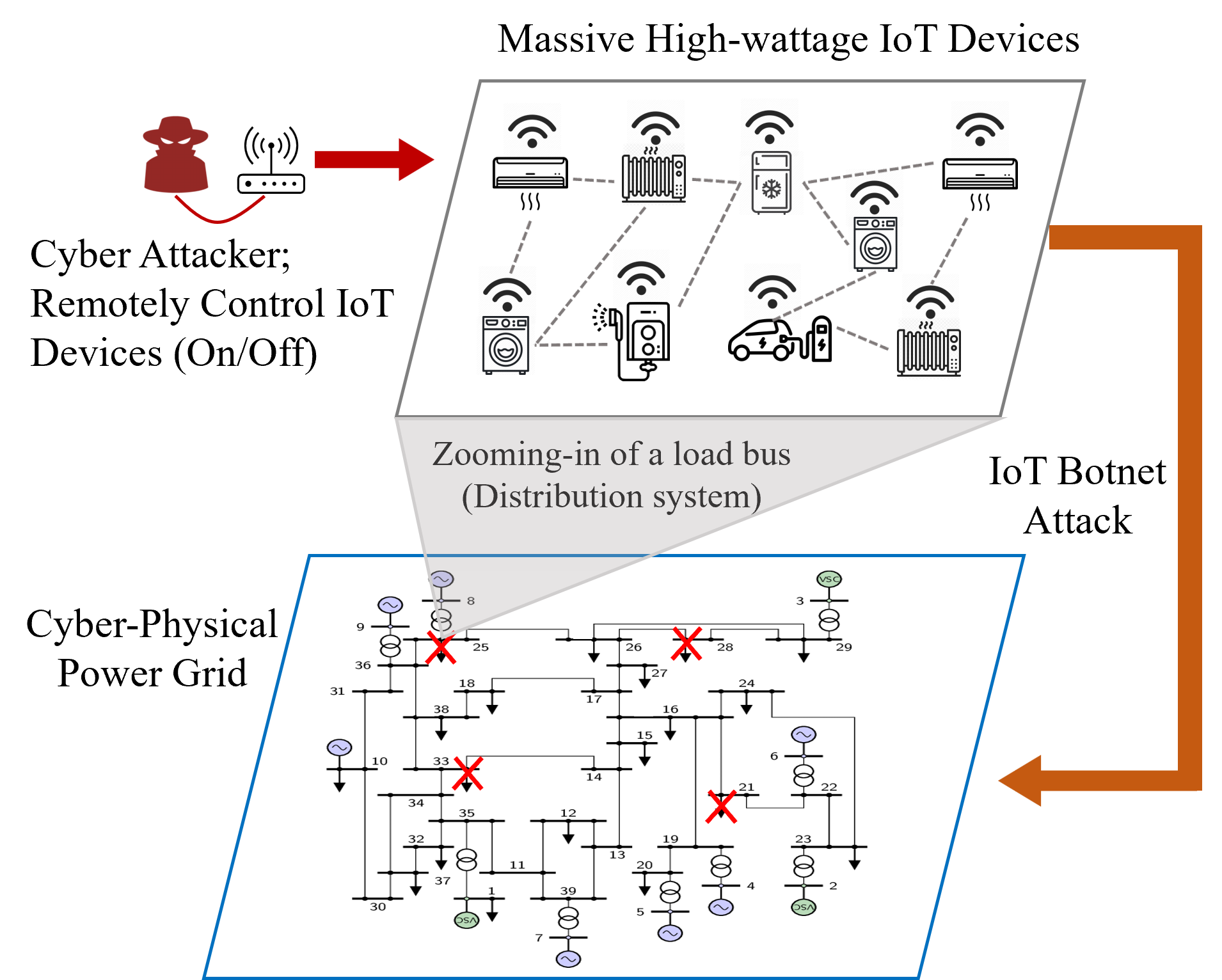}
  \caption[]{The massive IoT-controlled high-power energy devices introduce significant concerns on the cyber-physical power grid. Every load bus contains a considerable number of IoT devices. The adversary can manipulate these IoT devices in a coordinated fashion to launch a botnet attack that disrupts the grid operation.}
  \label{Figure_CPES}
\end{figure}

One essential consideration lacking in previous works is that they assumed that a certain proportion of the power load in the grid is vulnerable to being manipulated by the adversary and did not focus on how the attacker compromises these IoT devices and turns them into bots. It is imperative to understand the adversary's strategic behavior to develop effective countermeasures to protect the grid from IoT botnet attacks. The attacker generally uses malware to infect the IoT devices to form a botnet. The interconnection between IoT devices (connected directly or connected through the IoT platform) offers a convenient way for malware spreading. The attacker can increase the amount of load it can control in the grid to enable the attack. The grid system operator should take appropriate countermeasures to mitigate malware infections to enhance cyber resilience to IoT botnet attacks, such as patching the software, updating passwords, and reconfiguring the intrusion detection system of the IoT energy devices. 

In this work, we aim to develop a holistic approach to enhance the cyber-physical resiliency of the power grid under IoT botnet attack. Specifically, we first leverage a mean-field degree-based epidemic model to capture the dynamic malware propagation in the large-scale IoT device network. 
This modeling provides a macroscopic abstraction of compromising massive IoT devices in the distribution system and facilitates a quantitative assessment of the cyber vulnerability of the power grid under the botnet attack.
The outcome of the cyber layer directly impacts the feasibility of a successful IoT botnet attack. The attacker's goal is to destabilize the power grid (the transmission system) via strategic load manipulation at the physical layer while using the minimum amount of effort for malware propagation at the cyber layer. The grid operator (defender) needs to holistically devise an effective cyber defense strategy and resilient physical control to counteract the attacker's action in order to maintain safe grid operations. 

Capturing the multi-layer strategic interactions between the attacker and the defender requires a sophisticated game model. To this end, 
we develop a cross-layer game-theoretic framework in which the cyber layer game models the competition of malware spreading and detection in the botnet attack, and the physical dynamic game captures the interaction between the two players in controlling the grid dynamics. The solution concepts to both games are characterized by the Nash equilibrium (NE). The cyber layer game outcome directly impacts the players' strategies at the physical layer which calls for an integrated approach to enhance the grid's cyber-physical resiliency. The outcome strategy of the cyber game guides the system operator's defense against the botnet attacker which improves the cyber resilience of the power grid. In addition, the resilient control mechanism resulting from the dynamic game at the physical layer further counteracts the IoT-enabled botnet attack, which enhances the physical resilience of the grid.
Case studies on the IEEE-39 bus system show that the proposed cross-layer defensive scheme is effective in promoting the cyber-physical resiliency of the grid under IoT botnet attacks.

The contributions of this paper are summarized as follows.
\begin{enumerate}
    \item We develop a multi-layer game-theoretic approach to enhance the cyber-physical resiliency of power grids under IoT-enabled botnet attacks. 
    \item A tractable epidemic framework is proposed to quantitatively analyze the dynamic cyber risks of the grid under the botnet attack, which further facilitates the development of effective cyber-resiliency strategies.
    \item A dynamic game is established between the attacker and the grid operator at the physical layer, to which the NE strategy yields a strategic real-time counteraction to the dynamic botnet attack.
    \item We develop computationally efficient algorithms to find the NE solutions to both the cyber and physical layer games which are critical in developing holistic cyber-physical resilient schemes.
\end{enumerate}

The current work significantly differs from the preliminary version \cite{chen2023enhancing} in the following aspects. First, we provide a complete analysis of cyber risks of the cyber-physical grid (Section \ref{subsec:cyber_risk}), develop a cyber defense game framework, and further thoroughly characterize its equilibrium solution (Section \ref{subsec:cyber_game}). Second, the resilient control of the physical layer based on the dynamic game (Section \ref{sec:physical_design}) is completely new, while the preliminary version did not consider the strategic physical defense. Third, the case studies are greatly expanded based on the developed new framework and analytical results to showcase the cyber-physical cross-layer defense mechanism. Last but not least, the introduction and related works sections are significantly enriched.

\subsection{Organization of the Paper}
The rest of this paper is organized as follows. 
Section \ref{sec:related_work} discusses the related work.
Section \ref{sec:Prelim} introduces the basics of power grid dynamics and its counterpart under the IoT-enabled botnet attacks. Section \ref{sec:SIS} establishes an epidemic model to capture the cyber risk evolution of the power grids. Sections \ref{sec:cyber_design} and Section \ref{sec:physical_design} develop a holistic strategic cyber defense and robust physical operational schemes to enhance the power grid's cyber-physical resiliency. Case studies are used to demonstrate the proposed approaches in Section \ref{sec:case}, and Section \ref{sec:conclusion} concludes the paper.

\section{Related Work} \label{sec:related_work}
Cybersecurity is a practical concern to power systems as most of the adopted ICTs, such as phase measurement units (PMU), wide area measurement systems, and advanced metering infrastructure (AMI), are not built with strong security considerations. Cyberattacks on electric power systems can lead to undesired outcomes such as generator breakdown, power line outage, and even cascading failures, as demonstrated in the past \cite{sullivan2017cyber,bompard2013classification,langner2011stuxnet}. For example, in the infamous Ukraine power grid cyberattack on December 23, 2015, attackers leveraged a Trojan horse (malware) to intrude into the control system of the electric power grid and switched off breakers remotely, which led to the disconnections of 30 substations and approximately 225,000 consumers without electricity for 6 hours \cite{liang20162015}. Our work aims to improve the cybersecurity of modern power grids that exhibit a cyber-physical nature.

The integration of massive and heterogeneous IoT-operated high-power energy devices into the electric power grid gives rise to a new cybersecurity concern called IoT botnet attack \cite{soltan2018blackiot,shekari2021mamiot,dange2019iot}. 
The IoT botnet attack exploits a botnet (a group of compromised IoT devices) to execute rapid distributional attacks. In the infamous Mirai botnet attack, the attacker gained access to approximately 600,000 devices (e.g., routers, cameras) in a short period \cite{antonakakis2017understanding}, posing significant security threats to IoT networks. Recent studies corroborate the feasibility of IoT botnet attacks in the power systems, resulting in severe consequences like generator tripping and potential blackouts \cite{chen2020load,dabrowski2017grid,shekari2022madiot,lakshminarayana2022load}. Notably, these attacks can also extend to IoT-enabled high-power electric vehicle charging stations, which allows the attacker to manipulate substantial loads and disrupt both transmission and distribution systems in the power grid \cite{khan2019impact}.
%
While these previous works provide crucial insights into power system security to IoT botnet attacks, they are limited to analyzing the physical impacts of the attack on the grid operation.  Little attention was paid to the cyber layer analysis, such as attack modeling and dynamic risk assessment. This work aims to fill this gap by conducting a holistic understanding of the cyber-physical risks of the power grid under IoT botnet attacks and developing corresponding countermeasures.

To this end, we leverage epidemic models to study botnet attacks in the cyber layer. 
Initially developed for studying disease spread in populations, epidemic models have become prevalent for analyzing the propagation of phenomena within large-scale networks \cite{pastor2001epidemic,pastor2015epidemic}, providing an essential tool for cyber risk assessment. They have been applied to areas like malware dissemination in computer networks \cite{van2008virus,sahneh2013generalized}, the proliferation of fake news \cite{raponi2022fake,campan2017fighting}, and strategies for infection prevention and control \cite{nowzari2016analysis,omic2009protecting}. In power systems, epidemic models have been used to understand the dynamics of cascading failure and disturbances within the grid. For example, Wu et al. in \cite{wu2017disturbance} have used epidemic models to characterize the disturbance propagation in power systems resulting from factors like noisy data acquisition and transmission, and they have shown effectiveness in disturbance predictive accuracy compared with traditional analysis. Ma et al. in \cite{ma2012framework} have developed an epidemic-based framework to describe the frequency oscillation in power grids and have demonstrated the validity of their epidemic modeling approach using actual measurements from power system frequency monitoring networks. Furthermore, Zhang et al. in \cite{zhang2023analysis} have studied the propagation of failures across cyber and physical layers in power systems, and have developed a key node protection mechanism based on the Susceptible-Infected-Susceptible epidemic model, successfully tested on the IEEE 118-bus power system.

As cyberattacks frequently challenge power girds, much research has been focused on enhancing the grid's security and resiliency. The classical approach to protect the electric power system operation is through contingency plans ($N-1$ or $N-k$ contingency) \cite{luo2010rapid,wang2013two}. However, such a protection scheme may not be sufficient to counteract sophisticated attacks. 
More recent advances in protecting power system operations can be broadly categorized into two domains: cyber and physical. Within the cyber domain, countermeasures have been devised to defend various attacks, including false data injection attacks targeting the communication infrastructure of power systems \cite{musleh2019survey,liang2016review}, DoS attacks aimed at depleting power station resources \cite{liu2013denial,chen2019distributed}, and man-in-the-middle attacks seeking to maliciously manipulate grid data \cite{yang2012man,wlazlo2021man}.
On the physical side, attacks target assets like generators and loads. Defensive strategies have been studied to mitigate such threats, including methods to counteract load redistribution attacks \cite{yuan2011modeling,liu2014local,xiang2017game} and approaches to thwart load altering attacks \cite{mohsenian2011distributed,amini2016dynamic,lakshminarayana2021analysis}.
Different from the works that only focus on improving either the cyber or physical layer performance under attacks, our work develops a cross-layer approach to equip the power grid with integrated cyber-physical resiliency under strategic adversarial manipulations.

Game theory provides an effective framework for designing the cross-layer protection mechanism. It has also been widely used in both cybersecurity and power systems \cite{chi2021game,tushar2023survey,do2017game}. A major branch of game-theoretic studies in power systems predominantly addresses resource allocation for defense and protection.
For example, Wei et al. in \cite{wei2016stochastic} have formulated a coordinated cyber-physical attack protection strategy to stabilize the power grid based on stochastic games. The resulting Nash equilibrium allocates the defense resources to counteract the attacker without considering power flow or generator dynamics. Gao and Shi in \cite{gao2020modeling} have proposed a dynamic game-theoretic approach to mitigate cyber-physical attacks in power systems, where the defender and the attacker strategically allocate the resources to protect and compromise the cyber-physical elements in the power system using the Nash equilibrium. Similarly, Hasan et al. in \cite{hasan2020game} have developed a game-theoretic solution to strategically identify and protect critical stations in the power systems. Hyder and Govindarasu in \cite{hyder2020optimization} have studied the optimal investment strategies in the cybersecurity infrastructure of a smart grid based on game theory to deal with dynamically changing and uncertain adversary behaviors.
However, there is a gap in considering fine-grained attack and physical models alongside implementing a comprehensive cross-layer defensive mechanism against malicious attacks on large-scale IoT devices. We aim to bridge the gap in this work. 
Many works have also investigated the cyberattacks on the load side that aim to destabilize the power system. For example, dynamic load altering attacks and load redistribution attacks. To address dynamic load altering attacks, Amini et al. in \cite{amini2016dynamic} have proposed a protection scheme and PI controller design based on pole-placement optimization problems to stabilize the power grid. Eder et al. \cite{baron2017resilient} have developed a transactive control algorithm based on population games theory to redistribute the power demand and stable microgrids. Guo et al. in \cite{guo2021reinforcement} have leveraged minimax Q-learning to learn which generator bus to protect and prevent cascading failure caused by overload in transmission lines.
For load redistribution attacks, Xiang and Wang in \cite{xiang2017game} have leveraged game theory to seek the optimal defensive budget allocation for securing the power dispatch. Liu and Wang in \cite{liu2020defense} have focused on load redistribution attacks induced by insider threats. They have formulated the attack problem as a security resource allocation game and found the optimal strategy (e.g., load shedding) to protect the grid.
However, the methodologies in these works are insufficient to capture the cross-layer interdependency in the IoT botnet attack. Our work aims to develop a new approach to provide a holistic characterization of botnet attacks and improve the cyber-physical resiliency of the power grid.

\section{IoT-Enabled Botnet Attack to Power Grids} \label{sec:Prelim}

In this section, we first describe the system dynamics of the power grid under normal operation and then introduce the grid operation dynamics under IoT-enabled botnet attacks.

\subsection{Power Grid Base Model}
We consider a power grid consisting of a set of $\mathcal{N} = \mathcal{G} \cup \mathcal{L}$ buses, where $\mathcal{G} = \{g_1,\dots,g_{N_G} \}$ are generator buses and $\mathcal{L} = \{l_1,\dots,l_{N_L} \}$ are load buses, and $|\mathcal{N}|=N_G+N_L$. The power flow equations at the generator and load buses are given by \cite{amini2016dynamic}:
\begin{equation}
\label{eqn:powerflow_generator}
    P^G_i = \sum_{j \in \mathcal{G}} B_{ij} (\delta_i - \delta_j) + \sum_{j \in \mathcal{L}} B_{ij} (\delta_i - \theta_j), \quad \forall i \in \mathcal{G}, 
\end{equation}
\begin{equation}
\label{eqn:powerflow_load}
    -P^L_i = \sum_{j \in \mathcal{G}} B_{ij}(\theta_i - \delta_j) + \sum_{j \in \mathcal{L}} B_{ij} (\theta_i - \theta_j), \quad \forall i \in \mathcal{L}.
\end{equation}
Here, $P^G_i$ and $P^L_i$ are the power injection and consumption at bus $i$. $\delta_i$ and $\theta_j$ denote the voltage phase angle at the generator bus $i$ and the load bus $j$, respectively. $H_{ij}$ represents the admittance of the transmission line between buses $i$ and $j$. 

The generator dynamics at the generator bus $i$, $i \in \mathcal{G}$, can be modeled by the linear swing equation:
\begin{equation}
\label{eqn:generator_dyn}
\begin{split}
     \dot{\delta}_i &= \omega_i,\\
    M_i \dot{\omega}_i &= P^M_i - D^G_i \omega_i - P^G_i,
\end{split}
\end{equation}
where $\omega_i$ is the rotor frequency deviation at the generator bus $i$; $M_i$, $D^G_i$ and $P^M_i$ denote the rotor inertia, damping coefficient, and the mechanical power input, respectively.
Following \cite{amini2016dynamic,lakshminarayana2021analysis}, we assume that the mechanical power input is regulated by a proportional-integral (PI) controller, which is given by
\begin{equation}
\label{eqn:PI_controller}
    P^M_i = - K^P_i \omega_i - K^I_i \int_0^t \omega_i dt = - K^P_i \omega_i - K^I_i \delta_i,
\end{equation}
where $K^P_i > 0$ and $K^I_i > 0$ are the controller coefficients, respectively. When the load is identified, the power grid operator can design $K^P_i$ and $K^I_i$ for all generator buses to stabilize the power frequency deviation for safe operation. 
By integrating the controller \eqref{eqn:PI_controller} and the power flow equation \eqref{eqn:powerflow_generator} to the generate dynamics \eqref{eqn:generator_dyn}, we obtain for every generator bus $i \in \mathcal{G}$:
\begin{equation*}
\begin{split}
    -M_i \dot{\omega_i} = (K^P_i &+ D^G_i) \omega_i + K^I_i \delta_i \\ 
    &+ \sum_{j \in \mathcal{G}} B_{ij} (\delta_i - \delta_j) + \sum_{j \in \mathcal{L}} B_{ij} (\delta_i - \theta_j).
\end{split}
\end{equation*}
On the load side, we consider two types of loads in the grid \cite{zhao2014design}: the frequency-sensitive and frequency-insensitive loads. The former one at load bus $i$ can be represented by $D^L_i \phi_i$ where $\phi_i = -\dot{\theta}_i$ and $D^L_i > 0$ is the load damping coefficient. With a little abuse of notation, we use $P^L_i$ to represent the frequency-insensitive loads. Then, the power flow equation \eqref{eqn:powerflow_load} at the load bus $i$ can be rewritten as
\begin{equation*}
    -D^L_i \phi_i - P^L_i = \sum_{j \in \mathcal{G}} B_{ij}(\theta_i - \delta_j) + \sum_{j \in \mathcal{L}} B_{ij} (\theta_i - \theta_j).
\end{equation*}
Therefore, the overall power grid dynamics can be written as
\begin{equation}
\label{eqn:dyn}
\begin{split}
        & \begin{bmatrix} 
{\bf I} & {\bf O} & {\bf O} & {\bf O} \\
{\bf O}  & {\bf I}  & {\bf O} & {\bf O} \\
{\bf O} & {\bf O} & {-\bf M} & {\bf O} \\
{\bf O} & {\bf O} & {\bf O} & {\bf O}
\end{bmatrix}
\begin{bmatrix} 
\dot{\boldmath{\delta}} \\
\dot{\theta} \\
\dot{\omega} \\
\dot{\phi}
\end{bmatrix} = \begin{bmatrix} 
{\bf 0} \\
{\bf 0} \\
{\bf 0} \\
{\bf I} 
\end{bmatrix} \mathbf{P}^{L} 
+ \\
& \begin{bmatrix} 
{\bf O} & {\bf O} & {\bf I} & {\bf O} \\
{\bf O} &  {\bf O}  & {\bf O} & {\bf -I} \\
{\bf K}^{I} + {\bf B}^{GG} & {\bf B}^{GL} & {\bf K}^P + {\bf D}^G  & {\bf O} \\
{\bf B}^{LG} & {\bf B}^{LL} & {\bf O} &  {\bf D}^{L}
\end{bmatrix}
\begin{bmatrix} 
{\delta} \\
{\theta}   \\
{\omega} \\
\phi
\end{bmatrix}.
\end{split}
\end{equation}
Here, $\mathbf{P}^L \in \mathbb{R}^{N_L}$ denotes the aggregated load vector; $\delta\in \mathbb{R}^{N_G}, \omega \in \mathbb{R}^{N_G}$ are the aggregated phase angle and rotor frequency deviation at the generator buses, respectively;
$\theta\in \mathbb{R}^{N_L}, \phi \in \mathbb{R}^{N_L}$ denote the aggregated phase angle and the frequency deviation at the load buses, respectively; ${\bf M}\in \mathbb{R}^{N_G \times N_G},{\bf D}^G \in \mathbb{R}^{N_G \times N_G}$ and ${\bf D}^L \in \mathbb{R}^{N_L \times N_L}$ are diagonal matrices with diagonal entries given by the generator inertia, generator damping coefficients, and load damping coefficients, respectively;
${\bf K}^P \in \mathbb{R}^{N_G \times N_G},{\bf K}^I\in \mathbb{R}^{N_G \times N_G}$ are diagonal matrices with entries given by controller coefficients of the generators, respectively.  
We denote by ${\bf B}_{bus} = \begin{bmatrix} {\bf B}^{GG}  & {\bf B}^{GL} \\ {\bf B}^{LG} & {\bf B}^{LL} \end{bmatrix}$ the admittance matrix, where ${\bf B}^{GG} \in \mathbb{R}^{N_G \times N_G}, {\bf B}^{LL} \in \mathbb{R}^{N_L \times N_L}, {\bf B}^{GL} \in \mathbb{R}^{N_G \times N_L}$. 
The nominal frequency of the grid is denoted by $\omega_{\text{n}}$. The safety limits of the frequency deviation of generator bus $i\in\mathcal{G}$ satisfies $| \omega_{\text{n}} - \omega_i| \leq \omega_{\text{max}}$, where $\omega_{\text{max}}$ is the maximum permissible frequency deviation.

\subsection{Power Grid Model under Botnet Attacks}
Under IoT-based botnet attacks, the attacker manipulates the system load by synchronously switching on or off a large number of high-power devices. We follow the modeling in previous works \cite{amini2016dynamic,lakshminarayana2021analysis} by assuming that the frequency-insensitive loads at the load buses consist of two components ${\bf P}^L = {\bf P}^{LS} + {\bf P}^{LV}$, where ${\bf P}^{LS}$ and ${\bf P}^{LV}$ denote the secure portion of the load and the vulnerable portion of the load, respectively.
We denote $\mathcal{V}\subseteq \mathcal{L}$ as the set of vulnerable load buses, where   $|\mathcal{V}|=N_V$. Then, using botnet attacks, the attacker can manipulate some amount of the vulnerable load ${\bf P}^a \leq {\bf P}^{LV}$ in the power grid maliciously to disrupt and destabilize the normal operation. 

When a malicious attack happens, the existing PI controllers may not be sufficient to stabilize the grid. Besides, the attacker can launch strategic attacks by altering $\mathbf{P}^a$ dynamically, making it more challenging for predefined controllers to achieve the desired outcome. The grid operator requires additional mechanical power inputs, denoted as ${\bf P}^d \in \mathbb{R}^{N_G}$, to regulate the grid and mitigate the attack consequence. Therefore, the nominal power grid model \eqref{eqn:dyn} under botnet attacks can be modified to
\begin{equation}
\label{eqn:dyn_1}
\begin{split}
        & \begin{bmatrix} 
{\bf I} & {\bf O} & {\bf O} & {\bf O} \\
{\bf O}  & {\bf I}  & {\bf O} & {\bf O} \\
{\bf O} & {\bf O} & {-\bf M} & {\bf O} \\
{\bf O} & {\bf O} & {\bf O} & {\bf O}
\end{bmatrix}
\begin{bmatrix} 
\dot{\boldmath{\delta}} \\
\dot{\theta} \\
\dot{\omega} \\
\dot{\phi}
\end{bmatrix} = \begin{bmatrix} 
{\bf 0} \\
{\bf 0} \\
{\bf 0} \\
{\bf I} 
\end{bmatrix} (\mathbf{P}^{LS}+\mathbf{P}^a) 
+ \begin{bmatrix}  {\bf 0} \\ {\bf 0} \\ -{\bf I} \\ {\bf 0} \end{bmatrix} \mathbf{P}^{d} \\
& +\begin{bmatrix} 
{\bf O} & {\bf O} & {\bf I} & {\bf O} \\
{\bf O} &  {\bf O}  & {\bf O} & {\bf -I} \\
{\bf K}^{I} + {\bf B}^{GG} & {\bf B}^{GL} & {\bf K}^P + {\bf D}^G  & {\bf O} \\
{\bf B}^{LG} & {\bf B}^{LL} & {\bf O} &  {\bf D}^{L}
\end{bmatrix}
\begin{bmatrix} 
{\delta} \\
{\theta}   \\
{\omega} \\
\phi
\end{bmatrix}.
\end{split}
\end{equation}
The last row of \eqref{eqn:dyn_1} gives
\begin{equation}
\label{eqn:phi}
    \phi = -({\bf D}^L)^{-1} \left(
    \begin{bmatrix} 
        {\bf B}^{LG} \\  {\bf B}^{LL} \\ \bf{O}
    \end{bmatrix}^\tp 
    \begin{bmatrix}
        \delta \\ \theta \\ \omega
    \end{bmatrix}
     + {\bf P}^{LS} + {\bf P}^{a} \right).
\end{equation}
We substitute $\phi$ in \eqref{eqn:dyn_1} with \eqref{eqn:phi} and let $\mathbf{x} = \begin{bmatrix} \delta & \theta & \omega \end{bmatrix}$. Then we obtain the following dynamical system:
\begin{equation}
\label{eqn:dyn_2}
 \dot{\mathbf{x}} = A \mathbf{x} + B_d \mathbf{P}^d + B_a \mathbf{P}^a + c,
\end{equation}
where the state matrix $A \in \mathbb{R}^{2N_G+N_L}$ is given by
\begin{equation*}
    \begin{bmatrix}
        {\bf I} & {\bf O} & {\bf O} \\ 
        {\bf O} & ({\bf D}^L)^{-1} & {\bf O} \\ 
        {\bf O} & {\bf O} & -{\bf M}^{-1}
    \end{bmatrix} \cdot
    \begin{bmatrix}
        {\bf O} & {\bf O} & {\bf I} \\ 
        {\bf B}^{LG} & {\bf B}^{LL} & {\bf O} \\ 
        {\bf K}^I + {\bf B}^{GG} & {\bf B}^{GL}  & {\bf K}^P + {\bf D}^G
    \end{bmatrix};
\end{equation*}
the operator's and the attacker's input matrices and the constant term are given by
\begin{equation*}
    B_d = \begin{bmatrix}
        {\bf O} \\ {\bf O} \\ {\bf M}^{-1}
    \end{bmatrix}, \
    B_a = \begin{bmatrix}
        {\bf O} \\ ({\bf D}^L)^{-1} \\ {\bf O}
    \end{bmatrix}, \
    c = \begin{bmatrix}
        {\bf O} \\ ({\bf D}^L)^{-1} {\bf P}^{LS} \\ {\bf O}
    \end{bmatrix}.
\end{equation*}
Eq. \eqref{eqn:dyn_2} specifies the evolution of power system dynamics under defense and attack actions.
The corresponding discrete system with sampling time $T_s$ can be written as
\begin{equation}
\label{eqn:dyn_3}
    \mathbf{x}_{t+1} = \tilde{A} \mathbf{x}_t + \tilde{B}_d \mathbf{P}^d_t + \tilde{B}_a \mathbf{P}^a_t + \tilde{c},
\end{equation}
where the subscript $t$ represents the time step $t$.

\section{Epidemic Modeling of Botnet Attacks}\label{sec:SIS}
In this section, we model and quantify the systemic risk of power grids with a massive integration of IoT-operated energy devices using a susceptible-infected-susceptible (SIS) model. It has been empirically shown that such an epidemic model can capture the dynamics of botnet propagation with a high accuracy \cite{dagon2006modeling}. 

Due to the large-scale feature of IoT-controlled energy devices in the network, finite modeling would be prohibitive. We resort to complex network models, which can capture the characteristics of massive interconnections in the IoT device network. Specifically, let $k$ be the degree of an IoT device (can be regarded as a node) in the grid, where $k\in \mathcal{K}:=\{0,1,2,...,K\}$, and $p(k)\in[0,1]$ be the probability distribution of node's degree in the network. 
Further, we leverage the SIS epidemic model to characterize the stealthy botnet attack process and estimate the fraction of compromised IoT devices. Let $I_{k}(t)\in[0,1]$ be the density of the IoT devices of degree $k$ compromised by the attacker at time $t$. Then, the dynamics of the botnet attack process in the IoT device network can be described by \cite{pastor2015epidemic}:
\begin{equation}\label{eqn:Ik}
    \frac{dI_{k}(t)}{dt}=-\gamma I_{k}(t)+\zeta k[1-I_{k}(t)]\Theta(t),
\end{equation}
where $\gamma$ and $\zeta$ are the recovery and spreading rates, respectively;
\begin{equation}
    \Theta(t)=\frac{\sum_{k\in\mathcal{K}}kp(k)I_{k}(t)}{
\langle k \rangle}
\end{equation}
represents the probability of a given link connected to an infected IoT device; $\langle k \rangle =\sum_k kp(k)$ is the average connectivity of IoT devices in the grid. The attacker's behavior in the model is reflected by the spreading rate $\zeta$. A large $\zeta$ indicates that the attacker spends more effort compromising IoT-controlled energy devices.
The natural recovery rate $\gamma$ captures the malware elimination ability of IoT devices, e.g., via automatic software and firmware updates. 
In this work, we use a scale-free network to model the IoT device network in the power grid. Each IoT device is treated as a node and obeys a power-law degree distribution $p(k) \sim k^{-3}, k \in \mathcal{K}$. 

This epidemic model facilitates quantifying the level of vulnerable load and the systemic risks of the grid due to the integration of massive IoT-controlled energy devices. Specifically, the \emph{systemic risk} of the grid can be quantified by 
\begin{equation}\label{eqn:R}
    R(t) = I(t)\cdot N_d \cdot W_d,
\end{equation}
where $N_d$ and $W_d$ denote the estimated total number of IoT-controlled energy devices and their average power usage (watts), respectively. The quantity 
\begin{equation}\label{eqn:I}
    I(t)=\sum_{k\in\mathcal{K}} p(k)I_{k}(t)
\end{equation}
represents the aggregated percentage of compromised IoT-controlled energy devices via the botnet attack and can be used to quantify the \emph{cyber risk}. It can be observed that the systemic risk $R(t)$ is not static but evolves dynamically governed by the malware spreading process \eqref{eqn:Ik}. To quantify the risk propagation in terms of the attacker's behavior $\zeta$ and the underlying cyber dynamics, it is necessary to analyze the differential equation \eqref{eqn:Ik}, which is pursued in the next section.

\section{Risk Analysis and Cyber-Resilient Design}\label{sec:cyber_design}
The developed epidemic model provides a systematic approach for risk quantification of the power system to IoT botnet attack. After recruiting a certain level of bots, the attacker needs to determine how to deploy the attack in terms of the location and the scale to disrupt the power grid operation. The defender, on the other hand, should devise effective means to counteract the attack. To this end, a holistic cyber-physical analysis is imperative. This section will first analyze the cyber layer risks and then develop a game-theoretic approach to enhance the cyber resilience of the power grids.

\subsection{Cyber Risk Analysis}\label{subsec:cyber_risk}
The malware spreading dynamics \eqref{eqn:Ik} describes the changes in the scale of IoT energy devices that can be maliciously manipulated. This process can be seen as botnet recruitment, in which the attacker aims to turn benign devices into bots. 
Since the attacker can manipulate the vulnerable loads after completing the botnet recruitment (the IoT infection), the systemic risk $\bar{R}$ and the cyber risk $\bar{I}$ at steady states are of more interest. Therefore, we investigate the steady state of the dynamics \eqref{eqn:Ik} at which $I_k(t), k\in \mathcal{K}$, reaches an equilibrium. This can be found by imposing stationarity: 
\begin{equation*}
    \frac{dI_{k}(t)}{dt}=-\gamma I_{k}(t)+\zeta k[1-I_{k}(t)]\Theta(t) = 0,
\end{equation*}
which gives 
\begin{equation}
\label{eqn:Theta}
    \bar{I}_k = \frac{\zeta k \bar{\Theta}(\zeta,\gamma)}{\gamma+\zeta k \bar{\Theta}(\zeta,\gamma)},
\end{equation}
where $\bar{\Theta}(\zeta,\gamma)$ indicates that $\Theta(t)$ only depends on $\zeta$ and $\gamma$ at the steady state. Based on \eqref{eqn:Theta}, we obtain, at the steady state,
\begin{equation}\label{eqn:Theta_S}
     \bar{\Theta}(\zeta,\gamma)=\frac{\sum_{k\in\mathcal{K}}kp(k) \bar{I}_{k}}{
\langle k \rangle}.
\end{equation}
Substituting \eqref{eqn:Theta} into \eqref{eqn:Theta_S}, we have a self-consistency equation on $\Theta(\zeta,\gamma)$ and hence we can obtain the cyber risk $\bar{I}$. We have the following proposition to characterize epidemic-free conditions.

\begin{proposition} \label{prop:1}
For any $\gamma > 0$ and $\zeta > 0$, the equilibrium cyber risk $\bar{I}$ is unique in $(0,1]$ if $\frac{\gamma}{\zeta} < \frac{\langle k^2 \rangle}{\langle k \rangle}$ and is $0$ if $\frac{\gamma}{\zeta} \geq \frac{\langle k^2 \rangle}{\langle k \rangle}$, where $\langle k^2 \rangle = \sum_k k^2 p(k)$.
\end{proposition}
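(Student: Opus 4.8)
The plan is to collapse the coupled steady-state relations into a single scalar fixed-point equation and then exploit concavity. Substituting \eqref{eqn:Theta} into \eqref{eqn:Theta_S} yields the self-consistency condition $\bar{\Theta} = F(\bar{\Theta})$, where
\[
F(\Theta) := \frac{1}{\langle k \rangle}\sum_{k\in\mathcal{K}} k\,p(k)\,\frac{\zeta k \Theta}{\gamma + \zeta k \Theta}.
\]
Since each $\bar{I}_k$ in \eqref{eqn:Theta} is an increasing function of $\bar{\Theta}$ and $\bar{I} = \sum_k p(k)\bar{I}_k$ by \eqref{eqn:I}, it suffices to analyze this equation: I would show that $\Theta = 0$ is always a root (giving $\bar{I}=0$), and that a positive root in $(0,1]$ exists and is unique exactly when $\gamma/\zeta < \langle k^2\rangle/\langle k\rangle$. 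The value of $\bar{I}$ is then recovered monotonically from $\bar{\Theta}$, so uniqueness of the positive $\bar{\Theta}$ transfers directly to uniqueness of the nonzero $\bar{I}$.

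First I would record the elementary properties of $F$ on $[0,1]$. Each summand $\Theta \mapsto \zeta k\Theta/(\gamma+\zeta k\Theta) = 1 - \gamma/(\gamma+\zeta k\Theta)$ vanishes at $\Theta=0$, is strictly increasing, and is strictly concave for $k\ge 1$, since its second derivative $-2\gamma(\zeta k)^2/(\gamma+\zeta k\Theta)^3$ is negative. Consequently $F(0)=0$ and $F$ is strictly increasing and strictly concave on $[0,1]$. Bounding each factor by $\zeta k/(\gamma+\zeta k) < 1$ also gives $F(1) < \frac{1}{\langle k\rangle}\sum_k k\,p(k) = 1$, so $F$ maps $[0,1]$ into $[0,1)$.

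The key step is to tie the existence of a positive root to the slope of $F$ at the origin. Differentiating term by term gives $F'(0) = \frac{\zeta}{\gamma\langle k\rangle}\sum_k k^2 p(k) = \frac{\zeta}{\gamma}\cdot\frac{\langle k^2\rangle}{\langle k\rangle}$, which is precisely where the threshold constant emerges. Setting $h(\Theta):=F(\Theta)-\Theta$, strict concavity of $F$ makes $h$ strictly concave with $h(0)=0$ and $h'(0)=F'(0)-1$. If $F'(0)\le 1$, that is $\gamma/\zeta \ge \langle k^2\rangle/\langle k\rangle$, then strict concavity forces $h(\Theta) < h'(0)\Theta \le 0$ for every $\Theta>0$, so $\Theta=0$ is the only root and $\bar{I}=0$. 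If $F'(0)>1$, that is $\gamma/\zeta < \langle k^2\rangle/\langle k\rangle$, then $h'(0)>0$ makes $h$ positive just to the right of the origin while $h(1)=F(1)-1<0$; the intermediate value theorem supplies a root in $(0,1)$, and strict concavity guarantees it is the only positive one. Mapping back through \eqref{eqn:Theta} and \eqref{eqn:I} then yields the unique $\bar{I}\in(0,1]$.

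I expect the delicate point to be uniqueness rather than existence: existence is immediate from the sign change of $h$, but uniqueness genuinely rests on the strict concavity of $F$ (equivalently, on $\zeta k\Theta/(\gamma+\zeta k\Theta)$ being a concave saturating nonlinearity). The clean argument I would use is that a strictly concave $h$ with $h(0)=0$ is either negative throughout $(0,\infty)$ when $h'(0)\le 0$, or strictly positive on an initial interval and thereafter strictly decreasing, so it crosses zero at most once on the positive axis and no second interior fixed point can appear. Because the threshold constant $\langle k^2\rangle/\langle k\rangle$ arises entirely from the computation of $F'(0)$, I would verify that derivative carefully, noting that the $k=0$ term contributes nothing and so does not affect the condition.
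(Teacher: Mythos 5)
Your proposal is correct and follows essentially the same route as the paper: both reduce to the scalar self-consistency equation for $\bar{\Theta}$ and extract the threshold $\langle k^2\rangle/\langle k\rangle$ from the behavior of the right-hand side at $\Theta=0$. The only difference is in how uniqueness is certified --- the paper divides out the trivial root and uses strict monotonicity of $f(\Theta)=\sum_k kp(k)\frac{\zeta k}{\gamma+\zeta k\Theta}-\langle k\rangle$ together with $f(1)<0$, whereas you keep the fixed-point form and invoke strict concavity of $F$; these are equivalent in strength, and your version has the small merit of explicitly handling the subcritical case $F'(0)\le 1$ and the map back from $\bar{\Theta}$ to $\bar{I}$, which the paper leaves implicit.
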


\begin{proof}
We can tell from \eqref{eqn:Theta_S} that $\bar{\Theta}(\gamma,\zeta) \in [0,1]$. The self-consistency equation on $\bar{\Theta}(\zeta,\gamma)$ yields
\begin{equation*}
    \bar{\Theta} = \frac{1}{\langle k \rangle} \sum_k \left( k p(k) \frac{\zeta k \bar{\Theta}}{\gamma + \zeta k \bar{\Theta}}  \right).
\end{equation*}
Note that $\bar{\Theta} = 0$ is always a feasible solution. Now assume $\bar{\Theta} \neq 0$, we have 
\begin{equation*}
    \sum_k k p(k) \frac{\zeta k}{\gamma + \zeta k \bar{\Theta}} - \langle k \rangle = 0.
\end{equation*}
Let $f(\Theta) = \sum_k k p(k) \frac{\zeta k}{\gamma + \zeta k \Theta} - \langle k\rangle$, which is continuous in $\Theta$. Note that $f'(\Theta) = -\sum_k kp(k) \frac{(\zeta k)^2}{(\gamma + \zeta k \Theta)^2} < 0$ for all $\Theta \in (0, 1]$ and $f(1) < 0$. Therefore, $f(\Theta)$ only has one zero if $f(0) > 0$, which gives $\frac{\zeta}{\gamma} \sum_k k^2 p(k) - \langle k \rangle > 0$, i.e., $\frac{\gamma}{\zeta} < \frac{\langle k^2 \rangle}{\langle k \rangle}$.
\end{proof}

Prop. \ref{prop:1} indicates that we can effectively reduce the cyber risk if the defender can protect the network from the malware up to a threshold. The threshold is related to the IoT network topology. If devices in the IoT network have more connections (larger $k$), the protection threshold becomes larger and the defender requires more effort to defend against the same level of the botnet attack.

The closed-form solution to $\bar{I}$ and $\bar{\Theta}$ for general degree distributions can be challenging to obtain. To characterize the cyber risk analytically, we use $k$ as a continuous random variable to approximate $\bar{I}_k$ and $\bar{\Theta}$ in \eqref{eqn:Theta}-\eqref{eqn:Theta_S}, which is valid for large-scale networks as in our scenario. The corresponding continuous power-law degree distribution becomes 
\begin{equation}\label{eqn:degree_dist}
    p(k) = 2d_{\min}^2 k^{-3}, \quad k \geq d_{\min},
\end{equation}
where $d_{\min}$ is the minimum number of connections of each node in the IoT device network. For example, $d_{\min}=2$ indicates that each IoT device is at least connected to two other devices. The average connectivity is
\begin{equation}
    \langle k \rangle = \sum_k kp(k) \simeq \int_{d_{\min}}^{\infty} k p(k) dk =  2d_{\min},
\end{equation}
Also, based on the continuous approximation of the degree $k$, we obtain
\begin{equation}
\begin{split}
     \bar{\Theta}(\zeta,\gamma) &= \int_{d_{\min}}^\infty d_{\min} k^{-2} \frac{\zeta k \bar{\Theta}(\zeta,\gamma)}{\gamma+\zeta k \bar{\Theta}(\zeta,\gamma)} dk \\
     & = {d_{\min}\zeta \bar{\Theta}(\zeta,\gamma)}  \int_{d_{\min}}^\infty \frac{1}{k(\gamma+\zeta k \bar{\Theta}(\zeta,\gamma))} dk,
 \end{split}
\end{equation}
which further gives
\begin{equation}
\label{eqn:Theta_S2}
    \bar{\Theta}(\zeta,\gamma) = \frac{\gamma}{\zeta d_{\min}} \cdot \frac{e^{-\gamma/(d_{\min}\zeta)}}{(1-e^{-\gamma/(d_{\min}\zeta)})}.
\end{equation}
The cyber risk $\bar{I}$ at the steady state becomes
\begin{equation}
\label{eqn:I_equilibrium}
\begin{split}
     \bar{I} &= \int_{d_{\min}}^\infty p(k) \bar{I}_k dk\\
     & = 2d_{\min}^2 \zeta \bar{\Theta}(\zeta,\gamma) \int_{d_{\min}}^\infty  k^{-2} \frac{1}{\gamma+\zeta k \bar{\Theta}(\zeta,\gamma)} dk.
     \end{split}
\end{equation}
Substituting \eqref{eqn:Theta_S2} into \eqref{eqn:I_equilibrium} and evaluating the integral leads to the following result: At the steady state, the percentage of infected IoT-enabled energy devices satisfies
\begin{equation}
\label{eqn:I_equ_final}
    \bar{I} 
    \sim e^{-\gamma/(d_{\min}\zeta)}.
\end{equation}
Based on \eqref{eqn:I_equ_final}, one can see that as the attack effort $\zeta$ increases, the size of the botnet (i.e., compromised energy devices) enlarges, which yields a higher level of systemic risk $\bar{R}=\bar{I}\cdot N_d \cdot W_d$ of the power grid.

\subsection{Cyber Defense Game}\label{subsec:cyber_game}
The attacker aims to devise a cost-effective malware spreading strategy (corresponding to large spreading $\zeta$) to disrupt the grid operation. While the system operator needs to enhance the cyber resilience of the grid to prevent the attack from happening. One way to achieve this is to improve the cyber detection capability of malware spreading. This countermeasure reduces the possibility of malicious control of IoT devices by the attacker. Some specific mechanisms include requiring the users to have manual software patches, regular password changes, etc., which corresponds to a higher recovery rate $\gamma$. Note that $\gamma$ and $\zeta$ reflect the consequence of the system operator (defender) and the attacker's effort to protect and attack the IoT device network. We capture the defender and attacker's cyber effort by $u_d$ and $u_a$, which affect the epidemic by $\gamma(u_d)$ and $\zeta(u_a)$.

The competition between the defender and the attacker constitutes a noncooperative game in which the attacker aims to compromise as many IoT-controlled energy devices as possible by botnet attack while the defender's objective is to reduce systemic risk through cyber defense. We define the cyber defense game as follows:
\begin{equation}
\label{eqn:cyber_game}
\begin{split}
    \min_{u_d \in \mathcal{U}_d} \quad & L_d(u_d, u_a):= C_d(u_d) + \bar{I}(u_d, u_a), \\ 
    \max_{u_a \in \mathcal{U}_a} \quad & L_a(u_d, u_a):= -C_a(u_a) + \bar{I}(u_d, u_a).
\end{split}
\end{equation}
Here, $\mathcal{U}_d \subseteq \mathbb{R}_+$ and $\mathcal{U}_a \subseteq \mathbb{R}_+$ are the admissible control sets; $C_d: \mathcal{U}_d \to \mathbb{R}_+$ and $C_a: \mathcal{U}_a \to \mathbb{R}_+$ denote the defense and attack cost. The cyber risk $\bar{I}$ in \eqref{eqn:I_equ_final} becomes a function of $u_d$ and $u_a$ because of $\gamma(u_d)$ and $\zeta(u_a)$.

We make some general assumptions on \eqref{eqn:cyber_game} to facilitate the subsequent analysis. First, we assume that $0 \in \mathcal{U}_d$ and $0 \in \mathcal{U}_a$, meaning that zero defense/attack effort is admissible. Second, we assume that the cost functions $C_d$ and $C_a$ are monotonically increasing and convex with $C_d(0) = 0$ and $C_a(0) = 0$. Third, we assume that $\gamma$ and $\zeta$ are monotonically increasing with $\gamma(0) > 0$ and $\zeta(0) > 0$. We also assume that the two functions are concave, indicating decreasing marginal effects in cyber protection and attack. This is because there are no perfect defenses and attacks in practice, even if one puts in a large protection and attack effort. 
We set $\bar{I}(u_d, u_a) = e^{-\gamma(u_d) / d_{\min} \zeta(u_a)}$ to obtain analytical results.

\subsubsection{Characterization of Cyber Risk $\bar{I}$}
We first analyze the property of the cyber risk $\bar{I}$ to approach the equilibrium solution of the cyber defense game \eqref{eqn:cyber_game}. An immediate corollary of Prop.~\ref{prop:1} is the following.

\begin{corollary}
For IoT networks with finite degree support set $\mathcal{K}$, the risk $\bar{I} = 0$ if $(u_d, u_a) \in \Phi(u_d, u_a) := \{ (u_d \in \mathcal{U}_d, u_a \in \mathcal{U}_a): u_d \geq \gamma^{-1}(\frac{\langle k^2 \rangle}{\langle k \rangle} \zeta(u_a)) \}$. For linear cases where $\gamma(u_d) = k_d u_d + \gamma_0$ and $\zeta(u_a) = k_a u_a + \zeta_0$, $\Phi(u_d, u_a) = \{(u_d, u_a): u_d \geq \frac{\langle k^2 \rangle k_a}{\langle k \rangle k_d} u_a + \frac{1}{k_d} (
\frac{\langle k^2 \rangle \zeta_0}{\langle k \rangle} - \gamma_0 )\}$, which is a convex region in $\mathcal{U}_d \times \mathcal{U}_a$.
\end{corollary}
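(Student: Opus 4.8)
The plan is to read off the epidemic-free condition from Proposition~\ref{prop:1} and rewrite it in terms of the control variables $(u_d, u_a)$. Since the corollary restricts to a finite degree support set $\mathcal{K}$, Proposition~\ref{prop:1} applies directly and asserts that $\bar{I} = 0$ precisely when $\frac{\gamma}{\zeta} \geq \frac{\langle k^2 \rangle}{\langle k \rangle}$. Substituting the control-dependent rates $\gamma = \gamma(u_d)$ and $\zeta = \zeta(u_a)$, this becomes $\gamma(u_d) \geq \frac{\langle k^2 \rangle}{\langle k \rangle} \zeta(u_a)$. Because $\gamma$ was assumed monotonically increasing, it admits an inverse on its range, and applying $\gamma^{-1}$ to both sides preserves the inequality direction, yielding $u_d \geq \gamma^{-1}\bigl(\frac{\langle k^2 \rangle}{\langle k \rangle}\zeta(u_a)\bigr)$. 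This is exactly the defining inequality of $\Phi$, which establishes the general claim.

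For the linear specialization I would simply insert $\gamma(u_d) = k_d u_d + \gamma_0$ and $\zeta(u_a) = k_a u_a + \zeta_0$ into the threshold inequality $\gamma(u_d) \geq \frac{\langle k^2 \rangle}{\langle k \rangle} \zeta(u_a)$ and solve for $u_d$. Since $k_d > 0$, dividing through by $k_d$ preserves the inequality and produces $u_d \geq \frac{\langle k^2 \rangle k_a}{\langle k \rangle k_d} u_a + \frac{1}{k_d}\bigl(\frac{\langle k^2 \rangle \zeta_0}{\langle k \rangle} - \gamma_0\bigr)$, matching the stated closed form. This step is purely algebraic rearrangement.

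To establish convexity I would observe that the linear description of $\Phi$ is the superlevel set of the affine map $(u_d, u_a) \mapsto u_d - \frac{\langle k^2 \rangle k_a}{\langle k \rangle k_d} u_a - \frac{1}{k_d}\bigl(\frac{\langle k^2 \rangle \zeta_0}{\langle k \rangle} - \gamma_0\bigr)$, i.e. a closed half-plane, which is convex. Intersecting this half-plane with the admissible set $\mathcal{U}_d \times \mathcal{U}_a$, a product of convex control sets, preserves convexity, so $\Phi$ is convex as claimed.

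Since the result is essentially a reformulation of Proposition~\ref{prop:1}, I do not expect a serious obstacle; the only point requiring care is the well-definedness of $\gamma^{-1}$ in the general nonlinear case. Strict monotonicity guarantees injectivity, but to apply the inverse one needs $\frac{\langle k^2 \rangle}{\langle k \rangle}\zeta(u_a)$ to lie in the range of $\gamma$. Where it does not, the threshold $\gamma^{-1}(\cdot)$ must be interpreted by convention, either as rendering the slice of $\Phi$ empty when no admissible $u_d$ can meet the requirement, or as saturating at the boundary of $\mathcal{U}_d$. I would state this convention explicitly so that the characterization of $\Phi$ remains unambiguous.
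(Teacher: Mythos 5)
Your proof is correct and follows exactly the route the paper intends: the paper states this corollary as an immediate consequence of Proposition~1 and gives no separate proof, and your argument (substituting $\gamma(u_d)$, $\zeta(u_a)$ into the threshold $\gamma/\zeta \geq \langle k^2\rangle/\langle k\rangle$, inverting the monotone $\gamma$, and noting that the linear case yields a half-plane, hence a convex region) is precisely that intended reasoning. Your added caveat about the well-definedness of $\gamma^{-1}$ when $\frac{\langle k^2\rangle}{\langle k\rangle}\zeta(u_a)$ falls outside the range of $\gamma$ is a reasonable point of care that the paper glosses over.
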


The linear choices of $\gamma$ and $\zeta$ provide a well-defined convex region to characterize epidemic-free conditions. However, this is not true for general cases. Besides, we must look into the case where $\bar{I}$ is positive.
We can check that $\frac{\partial \bar{I}}{\partial u_d} < 0$ for all $u_d \geq 0$ and $\frac{\partial \bar{I}}{\partial u_a} > 0$ for all $u_a \geq 0$, meaning that increasing defense or attack effort can lead to a monotonically decreasing or increasing in the infectious rate. Further computation shows $\frac{\partial^2 \bar{I}}{\partial u_d^2} > 0$, which indicates that the defense effort has a decreasing marginal effect in reducing the cyber risk. Therefore, the defender's objective is convex in $u_d$ for any $u_a$, and thus provides a unique optimal response to $u_a$. 
However, we have
\begin{equation}
\label{eqn:d2I_dua2}
\begin{split}
    \frac{\partial^2 \bar{I}}{\partial u_a^2} &= e^{-\frac{\gamma}{d_{\min}\zeta}} (\frac{\gamma \zeta'}{d_{\min}\zeta^2})^2 + e^{-\frac{\gamma}{d_{\min}\zeta}} \frac{\gamma}{d_{\min}\zeta^3} (\zeta'' \zeta - 2(\zeta')^2) \\
    &= 
    e^{-\frac{\gamma}{d_{\min}\zeta}} \frac{\gamma}{d_{\min}\zeta^2} \left[ (\frac{\gamma}{d_{\min}} - 2\zeta) \frac{(\zeta')^2}{(\zeta)^2}  +\zeta'' \right],
\end{split}
\end{equation}
which indicates a complex behavior of $\bar{I}$ in the attacker's attack effort $u_a$ given $u_d$. We have the following proposition to characterize $\bar{I}$.

\begin{proposition}
\label{prop:2}
The cyber risk $\bar{I}(u_d, u_a)$ is concave in $u_a$ for $u_a \geq \zeta^{-1} \left( \frac{\gamma(u_d)}{2d_{\min}} \right)$ given an $u_d \in \mathcal{U}_d$. In addition, $\bar{I}$ becomes concave in $u_a \in \mathcal{U}_a$ for any $u_d$ if $ \zeta(0) \geq \frac{\gamma(u_d)}{2d_{\min}}$, $\forall u_d \in \mathcal{U}_d$.
\end{proposition}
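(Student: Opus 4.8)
The plan is to work directly from the second-derivative formula \eqref{eqn:d2I_dua2} and reduce the concavity question to a sign analysis of its bracketed factor. First I would note that the prefactor $e^{-\gamma/(d_{\min}\zeta)}\frac{\gamma}{d_{\min}\zeta^2}$ is strictly positive for every admissible pair $(u_d,u_a)$, since $\gamma(u_d)>0$, $\zeta(u_a)>0$, and $d_{\min}>0$ by the standing assumptions. Hence the sign of $\partial^2\bar{I}/\partial u_a^2$ coincides with the sign of
\[
g(u_a) := \left( \frac{\gamma}{d_{\min}} - 2\zeta \right)\frac{(\zeta')^2}{\zeta^2} + \zeta'',
\]
and establishing concavity in $u_a$ is equivalent to showing $g(u_a)\le 0$.

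Next I would invoke the structural assumptions on $\zeta$ imposed in the game setup: $\zeta$ is monotonically increasing and concave, so $\zeta'>0$ and $\zeta''\le 0$. This means the second summand $\zeta''$ is already non-positive, while the factor $(\zeta')^2/\zeta^2$ multiplying the first summand is strictly positive. Consequently $g(u_a)\le 0$ is guaranteed the moment the remaining factor satisfies $\frac{\gamma}{d_{\min}}-2\zeta\le 0$, that is $\zeta(u_a)\ge \frac{\gamma(u_d)}{2d_{\min}}$. This is the crux of the argument: under this threshold \emph{both} summands are non-positive, so no delicate cancellation between the two terms is required.

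To convert the condition on $\zeta$ into the stated condition on $u_a$, I would use monotonicity of $\zeta$ to invert it: since $\zeta$ is increasing, $\zeta(u_a)\ge\frac{\gamma(u_d)}{2d_{\min}}$ holds precisely when $u_a\ge \zeta^{-1}\!\left(\frac{\gamma(u_d)}{2d_{\min}}\right)$, which yields the first claim. For the second claim I would apply monotonicity at the left endpoint instead: $\zeta(u_a)\ge\zeta(0)$ for every $u_a\ge 0$, so the hypothesis $\zeta(0)\ge\frac{\gamma(u_d)}{2d_{\min}}$ for all $u_d$ forces $\zeta(u_a)\ge\frac{\gamma(u_d)}{2d_{\min}}$ on the entire set $\mathcal{U}_a$, and concavity then holds globally in $u_a$ for any fixed $u_d$.

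Since the whole proof is a sign check, the main obstacle is not analytical depth but well-definedness and bookkeeping. I would need to confirm that $\frac{\gamma(u_d)}{2d_{\min}}$ lies in the range of $\zeta$ so that $\zeta^{-1}$ is meaningful, or equivalently interpret the threshold through the monotone preimage (with the understanding that if $\zeta(0)$ already exceeds it, the relevant preimage is simply $0$). I would also be careful to record that the concavity assumption on $\zeta$ enters only to sign $\zeta''\le 0$, and that monotonicity is what makes the inversion and the endpoint comparison valid; neither step requires anything beyond the assumptions already stated for the cyber defense game.
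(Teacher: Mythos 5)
Your proposal is correct and follows essentially the same route as the paper's proof: both reduce the sign of $\partial^2\bar{I}/\partial u_a^2$ in \eqref{eqn:d2I_dua2} to the sign of the bracketed term, observe that $\zeta''\le 0$ by the concavity assumption on $\zeta$ while the first summand becomes non-positive once $\gamma(u_d)/d_{\min}-2\zeta(u_a)\le 0$, and then invert this condition via the monotonicity of $\zeta$ to obtain the threshold $u_a\ge\zeta^{-1}\bigl(\gamma(u_d)/(2d_{\min})\bigr)$ and the global condition $\zeta(0)\ge\gamma(u_d)/(2d_{\min})$. Your version is slightly more careful about the positivity of the prefactor and the well-definedness of $\zeta^{-1}$, but these are bookkeeping refinements rather than a different argument.
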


\begin{proof}
Let $\frac{\gamma(u_d)}{d_{\min}}-2\zeta(u_a) \leq 0$, we obtain $u_a \geq \zeta^{-1} \left( \frac{\gamma(u_d)}{2d_{\min}} \right)$ because $\zeta' > 0$. Since $\zeta'' \leq 0$ from assumptions, we have $\frac{\partial^2 \bar{I}}{\partial u_a^2} \leq 0$ for $u_a \geq \zeta^{-1} \left( \frac{\gamma(u_d)}{2d_{\min}} \right)$. 
To strengthen the concavity condition for all $u_a \in \mathcal{U}_a$, we need $\zeta^{-1} \left( \frac{\gamma(u_d)}{2d_{\min}} \right) \leq 0$ for all $u_d \in \mathcal{U}_d$, i.e., $\zeta(0) \geq \frac{\gamma(u_d)}{2d_{\min}}$.
\end{proof}

Prop.~\ref{prop:2} provides a strengthened concavity condition in $\bar{I}$, which can be useful for deriving the NE of the cyber defense game \eqref{eqn:cyber_game}.
However, the condition is impractical because it indicates that the initial malware spreading rate $\zeta(0)$ with $u_a=0$ needs to be greater than the recovery rate, even if the defender has taken a defense effort $u_d > 0$. 
Besides, due to general forms of $\gamma$ and $\zeta$, the cyber risk $\bar{I}$ can exhibit different properties, such as convex-concave in $u_a$, depending on specific choices of $\gamma$ and $\zeta$. We use the following proposition to identify one general case of $\bar{I}$, which is valid for a wide class of $\gamma$ and $\zeta$, including linear function, power functions $x^\alpha (0 < \alpha < 1)$, and log functions $\log(x+1)$. 

\begin{proposition} \label{prop:3}
Let $\zeta(0) \geq \frac{\gamma(u_d)}{2d_{\min}}$ only for some $u_d \in \mathcal{U}_d$. We further assume that $\zeta''$ is differentiable. If $\frac{\partial^2 \bar{I}}{\partial u_a^2}\big\vert_{u_d, u_a=0} > 0$ for any $u_d \in \mathcal{U}_d$, then $\frac{\partial^2 \bar{I}}{\partial u_a^2}$ has a zero in $[0, \zeta^{-1}(\frac{\gamma}{2d_{\min}}))$. Furthermore, if $\zeta''' + \frac{2\zeta'}{\zeta^3}[(\frac{\gamma(u_d)}{d_{\min}}-2\zeta)\zeta\zeta'' - (\frac{\gamma(u_d)}{d_{\min}}-\zeta)(\zeta')^2] < 0$, the zero is unique, and $\bar{I}(u_d, u_a)$ is convex-concave in $u_a \in \mathcal{U}_a$ given any $u_d \in \mathcal{U}_d$.
\end{proposition}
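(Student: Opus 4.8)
The plan is to reduce the entire statement to a sign analysis of the bracketed factor appearing in \eqref{eqn:d2I_dua2}. Writing
\begin{equation*}
    g(u_a) := \left(\frac{\gamma(u_d)}{d_{\min}} - 2\zeta\right)\frac{(\zeta')^2}{\zeta^2} + \zeta'',
\end{equation*}
equation \eqref{eqn:d2I_dua2} reads $\frac{\partial^2 \bar{I}}{\partial u_a^2} = e^{-\gamma/(d_{\min}\zeta)}\frac{\gamma}{d_{\min}\zeta^2}\, g(u_a)$, and since the prefactor is strictly positive (because $\gamma,\zeta>0$), the sign of $\frac{\partial^2 \bar{I}}{\partial u_a^2}$ coincides with the sign of $g(u_a)$ for each fixed $u_d$. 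Establishing convex-concavity in $u_a$ therefore amounts to showing that $g$ is positive on an initial interval, negative afterward, and changes sign exactly once.

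For the existence of a zero I would invoke the intermediate value theorem. The hypothesis $\frac{\partial^2 \bar{I}}{\partial u_a^2}\big|_{u_a=0} > 0$ is precisely $g(0) > 0$; since $\zeta''(0) \le 0$ by concavity, this forces the first summand to be positive, i.e. $\frac{\gamma(u_d)}{d_{\min}} > 2\zeta(0)$, so that $u_a^\ast := \zeta^{-1}\!\left(\frac{\gamma(u_d)}{2d_{\min}}\right) > 0$ is well defined (this is exactly the regime not covered by Prop.~\ref{prop:2}). At $u_a^\ast$ the first summand of $g$ vanishes, leaving $g(u_a^\ast) = \zeta''(u_a^\ast) \le 0$. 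As $g$ is continuous (the standing assumptions give $\zeta \in C^2$), the intermediate value theorem yields a zero in $[0, u_a^\ast]$, which lies in $[0, u_a^\ast)$ whenever $\zeta''(u_a^\ast) < 0$.

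Uniqueness and the convex-concave conclusion will follow from strict monotonicity of $g$, and this is where the one genuinely computational step enters. Differentiating $g$ — which requires the assumed differentiability of $\zeta''$, so that $\zeta'''$ exists — and collecting terms over the common denominator $\zeta^3$, I expect to obtain
\begin{equation*}
    g'(u_a) = \zeta''' + \frac{2\zeta'}{\zeta^3}\Big[\big(\tfrac{\gamma(u_d)}{d_{\min}} - 2\zeta\big)\zeta\zeta'' - \big(\tfrac{\gamma(u_d)}{d_{\min}} - \zeta\big)(\zeta')^2\Big],
\end{equation*}
which is exactly the expression appearing in the proposition's hypothesis. The care required lies in the quotient-rule derivative of $(\zeta')^2/\zeta^2$ and in verifying that the $(\zeta')^2$ contributions recombine into the coefficient $-\big(\tfrac{\gamma(u_d)}{d_{\min}} - \zeta\big)$; I regard this bookkeeping as the main obstacle, although it is purely mechanical.

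Granting the stated inequality we have $g'<0$ on $\mathcal{U}_a$, so $g$ is strictly decreasing; together with $g(0)>0$ and $g(u_a^\ast)\le 0$ this forces a unique zero $u_a^{\ast\ast}\in[0,u_a^\ast]$. Consequently $g>0$ on $[0,u_a^{\ast\ast})$ and $g<0$ on $(u_a^{\ast\ast},\infty)\cap\mathcal{U}_a$, and by the sign correspondence established in the first step, $\bar{I}(u_d,u_a)$ is convex in $u_a$ up to $u_a^{\ast\ast}$ and concave beyond it, i.e. convex-concave in $u_a$ for the given $u_d$.
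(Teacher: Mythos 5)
Your proposal is correct and follows essentially the same route as the paper's own proof: isolating the bracketed factor in \eqref{eqn:d2I_dua2}, applying the intermediate value theorem between $u_a=0$ and $\zeta^{-1}\!\left(\frac{\gamma}{2d_{\min}}\right)$, and using the stated sign condition on the derivative of that factor to get strict monotonicity and hence uniqueness. You are in fact slightly more careful than the paper at the right endpoint, where $g(u_a^\ast)=\zeta''(u_a^\ast)$ is only $\leq 0$ under the concavity assumption (so the zero may sit at the endpoint itself, e.g.\ for linear $\zeta$), whereas the paper asserts strict negativity there.
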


\begin{proof}
From Prop.~\ref{prop:2}, $\bar{I}(u_d, u_a)$ is readily concave in $u_a \in \mathcal{U}_a$ for the $u_d$ such that $\zeta(0) \geq \frac{\gamma(u_d)}{2d_{\min}}$. 
For the rest $u_d \in \mathcal{U}_d$, let $f(u_a) = f_1(u_a) + f_2(u_a)$ where $f_1(u_a) = (\frac{\gamma}{d_{\min}}-2\zeta(u_a)) \frac{(\zeta'(u_a))^2}{(\zeta(u_a))^2}$ and $f_2(u_a) = \zeta''(u_a)$. $f$ is continuous in $u_a$. Since $f(0) > 0$ and $f(\zeta^{-1}(\frac{\gamma}{2d_{\min}})) < 0$, using the intermediate value theorem, $f(u_a)$ has at least one zero in $[0, \zeta^{-1}(\frac{\gamma}{2d_{\min}}))$.
Furthermore, we can check if
\begin{equation*}
    f'(u_a) = \zeta''' + \frac{2\zeta'}{\zeta^3}[(\frac{\gamma(u_d)}{d_{\min}}-2\zeta)\zeta\zeta'' - (\frac{\gamma(u_d)}{d_{\min}}-\zeta)(\zeta')^2] < 0,
\end{equation*}
$f$ is monotonically decreasing on $[0, \zeta^{-1}(\frac{\gamma}{2d_{\min}}))$ and the zero is unique, which indicates that $\frac{\partial^2 \bar{I}}{\partial u_a^2}$ only has one zero since $e^{-\frac{\gamma}{d_{\min}\zeta}} \frac{\gamma}{d_{\min}\zeta^2} > 0$.  
We write the zero as $\tilde{u}_{a,0}(u_d)$ to denote its dependency on $u_d$. For $u_d \leq \gamma^{-1}(2d_{\min}\zeta(0))$, i.e., $\zeta(0) \geq \frac{\gamma(u_d)}{2d_{\min}}$, we set $\tilde{u}_{a,0}(u_d) =\varnothing$ and $\bar{I}$ is concave in $u_a \in \mathcal{U}_a$; for $u_d > \gamma^{-1}(2d_{\min}\zeta(0))$, i.e., $\zeta(0) < \frac{\gamma(u_d)}{2d_{\min}}$, we have $\tilde{u}_{a,0}(u_d) > 0$ and $\bar{I}$ is convex in $u_a \in [0, \tilde{u}_{a,0}(u_d)]$ and concave in $u_a > \tilde{u}_{a,0}(u_d)$.
\end{proof}

The following corollary presents the analytical result when $\gamma$ and $\zeta$ are linear in $u_d$ and $u_a$, a special case of Prop.~\ref{prop:3}. 
\begin{corollary}
Let $\gamma(u_d) = k_d u_d + \gamma_0$ and $\zeta(u_a) = k_a u_a + \zeta_0$. We have a unique $\tilde{u}_{a,0}(u_d) = \frac{1}{2d_{\min} k_a}(k_d u_d + \gamma_0-2d_{\min}\zeta_0)$, which is always positive if $\zeta(0) < \frac{\gamma(0)}{2d_{\min}}$. Given any $u_d \in \mathcal{U}_d$, $\bar{I}(u_d, u_a)$ is convex in $u_a \in [0, \tilde{u}_{a,0}(u_d)]$ and concave in $u_a > \tilde{u}_{a,0}(u_d)$.
\end{corollary}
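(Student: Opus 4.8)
The plan is to specialize the second-derivative expression \eqref{eqn:d2I_dua2} to the affine setting and read off the sign directly, which is considerably simpler than invoking the full convex-concave machinery of Prop.~\ref{prop:3}. First I would substitute $\gamma(u_d) = k_d u_d + \gamma_0$ and $\zeta(u_a) = k_a u_a + \zeta_0$ into \eqref{eqn:d2I_dua2}. Since $\zeta$ is affine, $\zeta'' = 0$, so the bracketed term collapses to $(\frac{\gamma}{d_{\min}} - 2\zeta)\frac{(\zeta')^2}{\zeta^2}$ with $\zeta' = k_a$ constant. The prefactor $e^{-\gamma/(d_{\min}\zeta)}\frac{\gamma}{d_{\min}\zeta^2}$ together with $\frac{(\zeta')^2}{\zeta^2}$ is strictly positive throughout $\mathcal{U}_a$ (every factor is positive under the standing assumptions), so the sign of $\partial^2 \bar{I}/\partial u_a^2$ is governed entirely by the scalar $\frac{\gamma(u_d)}{d_{\min}} - 2\zeta(u_a)$.

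Second I would solve $\frac{\gamma(u_d)}{d_{\min}} - 2\zeta(u_a) = 0$ for $u_a$. This is a single linear equation: rearranging $\zeta(u_a) = \frac{\gamma(u_d)}{2d_{\min}}$ gives $k_a u_a + \zeta_0 = \frac{k_d u_d + \gamma_0}{2d_{\min}}$, which yields the stated $\tilde{u}_{a,0}(u_d) = \frac{1}{2d_{\min} k_a}(k_d u_d + \gamma_0 - 2d_{\min}\zeta_0)$. Uniqueness is immediate here \emph{without} appealing to the differential condition $f'(u_a) < 0$ in Prop.~\ref{prop:3}: the map $u_a \mapsto \frac{\gamma(u_d)}{d_{\min}} - 2(k_a u_a + \zeta_0)$ is strictly decreasing in $u_a$ (since $k_a > 0$), so it changes sign exactly once. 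I would then translate this sign into curvature of $\bar{I}$: for $u_a < \tilde{u}_{a,0}(u_d)$ the scalar is positive, hence $\partial^2 \bar{I}/\partial u_a^2 > 0$ and $\bar{I}$ is convex; for $u_a > \tilde{u}_{a,0}(u_d)$ it is negative, hence $\bar{I}$ is concave. This recovers the convex-concave structure of Prop.~\ref{prop:3} as the special case $\zeta'' = \zeta''' = 0$.

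Finally, for the positivity claim I would evaluate $\tilde{u}_{a,0}$ at $u_d = 0$, giving $\tilde{u}_{a,0}(0) = \frac{1}{2d_{\min} k_a}(\gamma_0 - 2d_{\min}\zeta_0)$, which is positive precisely when $\zeta_0 < \frac{\gamma_0}{2d_{\min}}$, i.e., $\zeta(0) < \frac{\gamma(0)}{2d_{\min}}$. Since $k_d u_d \geq 0$ for every admissible $u_d \geq 0$, monotonicity in $u_d$ gives $\tilde{u}_{a,0}(u_d) \geq \tilde{u}_{a,0}(0) > 0$ on all of $\mathcal{U}_d$, establishing that the threshold is always positive under the stated condition.

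There is essentially no obstacle in this corollary: the affinity of $\gamma$ and $\zeta$ annihilates the $\zeta''$ and $\zeta'''$ terms that made Prop.~\ref{prop:3} delicate, so the argument reduces to a one-line sign inspection plus inversion of a linear map. The only care required is the bookkeeping that the prefactor is strictly positive (so the sign of the second derivative really is controlled by $\frac{\gamma}{d_{\min}}-2\zeta$) and matching the inverted expression $\zeta^{-1}\!\left(\frac{\gamma(u_d)}{2d_{\min}}\right)$ to the claimed closed form for $\tilde{u}_{a,0}(u_d)$; both are routine.
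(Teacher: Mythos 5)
Your proposal is correct and follows essentially the same route the paper intends: the corollary is the specialization of Prop.~\ref{prop:3} (equivalently, of the second-derivative formula \eqref{eqn:d2I_dua2}) to affine $\gamma$ and $\zeta$, where $\zeta''=0$ reduces the sign of $\partial^2\bar{I}/\partial u_a^2$ to that of $\frac{\gamma(u_d)}{d_{\min}}-2\zeta(u_a)$, and inverting the resulting linear equation gives the stated $\tilde{u}_{a,0}(u_d)$. Your remark that uniqueness needs no separate monotonicity condition is just the observation that the condition $f'(u_a)<0$ of Prop.~\ref{prop:3} holds automatically in the linear case, so nothing is gained or lost relative to the paper's argument.
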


\begin{remark}
When $\bar{I}(u_d, u_a)$ is convex in $u_a$ given $u_d$, it indicates that the attacker has an increasing marginal effect in expanding the malware infection. The more attack effort, the faster the malware infection happens in the IoT device network. However, the rapid expansion is only valid within a threshold $u_a < \tilde{u}_{a,0}$. If the attacker keeps putting in more attack effort, the infection rate decreases. Rapid malware infection is undesired in practice, which is equivalent to reducing the region $ 0\leq u_a < \tilde{u}_{a,0}$. One way to achieve this is to improve the defender's defense effort $u_d$, which can decrease $\tilde{u}_{a,0}$. This fact can be directly observed in Corollary 3.1.
\end{remark}

\subsubsection{Characterization of Nash Equilibrium}
We first define the NE of the cyber defense game as follows.
\begin{definition}[Nash Equilibrium of Cyber Defense Game]
A strategy pair $(u_d^{\mathrm{NE}}, u_a^{\mathrm{NE}})$ constitutes a NE of the cyber defense game \eqref{eqn:cyber_game} if 
\begin{equation*}
    L_d(u_d^{\mathrm{NE}}, u_a^{\mathrm{NE}}) \leq L_d(u_d, u_d^{\mathrm{NE}}), \quad \forall u_d \in \mathcal{U}_d,
\end{equation*}
\begin{equation*}
    L_a(u_d^{\mathrm{NE}}, u_a^{\mathrm{NE}}) \geq L_a(u_d^{\mathrm{NE}}, u_a), \quad \forall u_a \in \mathcal{U}_a.
\end{equation*}
\end{definition}

We analyze the existence condition of the NE by adopting the convex-concave property of $\bar{I}$ discussed in Prop.~\ref{prop:2}. Note that the defender's objective function $L_d(u_d, u_a)$ is strictly convex in $u_d$ for any $u_a \in \mathcal{U}_a$ Therefore, the optimal response is unique, denoted by $BR_d(u_a) = \arg\min_{u_d} L_d(u_d, u_a)$, which can be computed by $\nabla_{u_d} L_d(u_d, u_a) = 0$. 
However, the convex-concave property breaks the uniqueness of the optimal response $BR_a(u_a) = \arg\min_{u_a} L_a(u_d, u_a)$, which can affect the search for NE.
We note that the convex-concave property indicates that $\frac{\partial \bar{I}}{\partial u_a}$ is unimodal given any $u_d \in \mathcal{U}_d$. i.e., $\frac{\partial \bar{I}}{\partial u_a}$ first increases in $u_a \in [0, \tilde{u}_{a,0})$ and then decreases in $u_a \geq \tilde{u}_{a,0}$. Then, we have the following proposition to characterize the existence of NE. 

\begin{proposition} \label{prop:4}
Suppose $\gamma$ and $\zeta$ are chosen to satisfy the conditions in Prop.~\ref{prop:2}. Then, the attacker's optimal response $BR_a(u_a)$ admits a unique value for every $u_d \in \mathcal{U}_d$, if $c_a'(u_a) \leq \frac{\partial \bar{I}}{\partial u_a}\big\vert_{u_d, u_a}$ for all $u_a \in [0, \tilde{u}_{a,0})$. Therefore, the cyber defense game admits an NE solution.
\end{proposition}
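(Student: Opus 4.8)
The plan is to reduce the existence of a Nash equilibrium to two ingredients: each player possesses a single-valued best response, and the joint best-response map admits a fixed point. The defender's half is already in hand, since $L_d(\cdot,u_a)$ is strictly convex in $u_d$ for every $u_a$ (the computations preceding the proposition give $\partial^2\bar I/\partial u_d^2>0$ and $C_d$ is convex), so $BR_d(u_a)=\arg\min_{u_d\in\mathcal U_d}L_d(u_d,u_a)$ is unique and characterized by $\nabla_{u_d}L_d=0$. All the real work is on the attacker's side, where $\bar I$ is only convex--concave in $u_a$ (Prop.~\ref{prop:3}), so $L_a(u_d,\cdot)$ need not be concave and uniqueness of the maximizer is not automatic.

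First I would write the attacker's first-order quantity $\partial L_a/\partial u_a = \partial\bar I/\partial u_a - C_a'(u_a)$ and exploit the unimodality of $\partial\bar I/\partial u_a$: by Prop.~\ref{prop:3} it increases on $[0,\tilde u_{a,0})$ and decreases on $[\tilde u_{a,0},\infty)$, with $\tilde u_{a,0}$ the inflection point. On the convex branch $[0,\tilde u_{a,0})$ the hypothesis $C_a'(u_a)\le \partial\bar I/\partial u_a$ forces $\partial L_a/\partial u_a\ge 0$, so $L_a(u_d,\cdot)$ is nondecreasing there; hence no maximizer lies in the interior of the convex region and, by continuity, $L_a(u_d,\tilde u_{a,0})\ge L_a(u_d,u_a)$ for all $u_a\in[0,\tilde u_{a,0})$. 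On the concave branch $[\tilde u_{a,0},\infty)\cap\mathcal U_a$ the risk $\bar I$ is strictly concave (its second derivative is strictly negative past its unique zero) and $-C_a$ is concave, so $L_a(u_d,\cdot)$ is strictly concave with a unique maximizer. Gluing the two branches at $\tilde u_{a,0}$ shows the global maximizer over $\mathcal U_a$ is unique and located on the concave branch (or at the right endpoint of $\mathcal U_a$ if it is bounded); this defines the single-valued $BR_a(u_d)$ for every $u_d\in\mathcal U_d$.

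With both best responses single-valued, I would close the argument by a fixed-point theorem. Taking $\mathcal U_d$ and $\mathcal U_a$ to be compact convex (bounded intervals in $\mathbb R_+$ containing $0$), continuity of $BR_d$ follows from strict convexity of $L_d$, and continuity of $BR_a$ follows from Berge's maximum theorem applied to the attacker's problem (equivalently, from the implicit function theorem on $\partial\bar I/\partial u_a=C_a'$ in the strictly concave branch, where the second derivative is nonzero). The joint map $(u_d,u_a)\mapsto(BR_d(u_a),BR_a(u_d))$ is then a continuous self-map of the compact convex set $\mathcal U_d\times\mathcal U_a$, and Brouwer's fixed-point theorem yields a fixed point, which is by definition a Nash equilibrium of \eqref{eqn:cyber_game}.

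The main obstacle is the uniqueness of $BR_a$ in the non-concave regime: ordinarily uniqueness rests on concavity, which fails here. The key insight I would emphasize is that the ``marginal infection gain dominates marginal attack cost'' condition collapses the convex region into a monotone segment of $L_a$, so the optimization effectively takes place only on the concave branch, where standard arguments apply. A secondary point requiring care is the continuity of $BR_a$ at parameter values $u_d$ for which $\tilde u_{a,0}$ degenerates (i.e.\ $\zeta(0)\ge \gamma(u_d)/2d_{\min}$, where $\bar I$ is concave throughout and $\tilde u_{a,0}=\varnothing$ by Prop.~\ref{prop:3}), together with the treatment of maximizers on the boundary of $\mathcal U_a$; both cases leave the uniqueness and continuity conclusions intact.
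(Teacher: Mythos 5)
Your proposal is correct and follows essentially the same route as the paper: the hypothesis $C_a'\le\partial\bar I/\partial u_a$ on $[0,\tilde u_{a,0})$ rules out a maximizer on the convex branch so that the single crossing of $C_a'$ with the decreasing tail of $\partial\bar I/\partial u_a$ on the concave branch gives a unique $BR_a(u_d)$, and Brouwer's fixed-point theorem applied to the joint best-response map then yields the NE. You are somewhat more careful than the paper about the compactness of $\mathcal U_d\times\mathcal U_a$, the continuity of the best responses (via Berge's maximum theorem), and the degenerate case $\tilde u_{a,0}=\varnothing$, all of which the paper's proof leaves implicit.
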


\begin{proof}
We can check that $\frac{\partial \bar{I}}{\partial u_a} = e^{-\frac{\gamma}{d\zeta}} \frac{\gamma \zeta'}{d\zeta^2}$ and $\frac{\partial \bar{I}}{\partial u_a} \to 0$ as $u_a\to\infty$ given any $u_d \in \mathcal{U}_d$. On the other hand, we have $\lim_{u_a \to \infty} C_a'(u_a) = \infty$ from assumptions $C_a' > 0$ and $C_a'' > 0$. Therefore, there is only one intersection of $C_a'(u_a)$ and $\frac{\partial \bar{I}}{\partial u_a}$ in the entire $\mathcal{U}_a$. It indicates that $L_a(u_d, u_a)$ only has one maximizer in $\mathcal{U}_a$ for any given $u_d \in \mathcal{U}_d$. Thus, $BR_a(u_d)$ is unique. 
Note that from the assumptions on $C_d$ and $C_a$, we obtain that $L_d(u_d, u_a) \to \infty$ as $u_d \to \infty$ for any given $u_a$, $L_a(u_d, u_a) \to \infty$ as $u_a \to \infty$ as any given $u_d$. Therefore, using the Brouwer's fixed point theorem, the equations
\begin{equation*}
    \begin{bmatrix} u_d \\ u_a \end{bmatrix} = 
    \begin{bmatrix} BR_d(u_a) \\ BR_a(u_d) \end{bmatrix}
\end{equation*}
admits a fixed point $(u^*_d, u^*_a)$, i.e., $L_d(u^*_d, u^*_a) \leq L_d(u_d, u^*_a)$ $\forall u_d \in \mathcal{U_d}$ and $L_a(u^*_d, u^*_a) \geq L_a(u^*_d, u_a)$ $\forall u_a \in \mathcal{U_a}$ $\forall u_a \in \mathcal{U}_a$. Thus, $(u^*_a, u^*_a)$ forms the NE of the cyber defense game.
\end{proof}

The condition in Prop.~\ref{prop:4} is mild. It can be easily satisfied by selecting proper $C_a$, especially when $u_d$ is large and the corresponding $\tilde{u}_{a,0}(u_d)$ is very small. We have the following corollary to discuss the linear-quadratic choices of $\gamma$, $\zeta$, and $C_a$.

\begin{corollary}
For linear choices of $\gamma$ and $\zeta$ and quadratic choice of $C_a = \frac{1}{2} c_a u_a^2$. The condition in Prop.~\ref{prop:4} to guarantee the unique $BR_a$ becomes $e^{-\frac{\gamma(u_d)}{d_{\min}\zeta_0}} \cdot \frac{\gamma(u_d) k_a}{d_{\min}\zeta_0} \geq c_a$ for every given $u_d$. Since $u_d = \gamma^{-1}(d_{\min}\zeta_0)$ maximizes the left-hand side term, the condition is further simplified to $e^{-1} k_a \geq c_a$.
\end{corollary}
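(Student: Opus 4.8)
The plan is to specialize the pointwise requirement of Prop.~\ref{prop:4}, namely $C_a'(u_a)\le \partial\bar I/\partial u_a$ for all $u_a\in[0,\tilde u_{a,0})$, to the linear--quadratic data, then collapse this interval condition to a single scalar inequality in $u_d$ and optimize it in closed form. First I would insert $\gamma(u_d)=k_du_d+\gamma_0$, $\zeta(u_a)=k_au_a+\zeta_0$ (hence $\zeta'=k_a$) and $C_a'(u_a)=c_au_a$ into the marginal-benefit expression derived in the proof of Prop.~\ref{prop:4}, namely $\partial\bar I/\partial u_a=e^{-\gamma/(d_{\min}\zeta)}\gamma\zeta'/(d_{\min}\zeta^2)$, which turns the requirement into
\begin{equation*}
c_au_a\le e^{-\gamma(u_d)/(d_{\min}\zeta(u_a))}\,\frac{\gamma(u_d)\,k_a}{d_{\min}\,\zeta(u_a)^2},\qquad u_a\in[0,\tilde u_{a,0}),
\end{equation*}
with $\tilde u_{a,0}$ given in closed form by the linear-case corollary following Prop.~\ref{prop:3}.

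Next I would use the convex--concave structure: on the convex branch $[0,\tilde u_{a,0})$ established in Prop.~\ref{prop:3}, the marginal benefit $u_a\mapsto\partial\bar I/\partial u_a$ is increasing, so its smallest value on the interval occurs at $u_a=0$, equal to $e^{-\gamma/(d_{\min}\zeta_0)}\gamma k_a/(d_{\min}\zeta_0^2)$, while the marginal cost $c_au_a$ is largest at the right endpoint $u_a\to\tilde u_{a,0}^-$, where $\zeta\to\gamma/(2d_{\min})$. Comparing these extremal values reduces the interval condition to a scalar inequality of the claimed shape, $e^{-\gamma(u_d)/(d_{\min}\zeta_0)}\gamma(u_d)k_a/(d_{\min}\zeta_0)\ge c_a$. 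This reduction is where I expect the main difficulty: both the marginal cost and the marginal benefit are increasing on the convex branch, so a crude endpoint comparison is lossy and does not automatically reproduce the exact $\zeta_0$-scaling in the stated bound; pinning down that constant requires carefully tracking how $\tilde u_{a,0}$ scales with $\zeta_0$, and I would verify that the final inequality dominates the pointwise one by checking the slack limit $u_a\to0^+$ and the tight limit $u_a\to\tilde u_{a,0}^-$.

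Finally I would optimize the resulting $u_d$-dependent bound. Writing $y:=\gamma(u_d)/(d_{\min}\zeta_0)$, the left-hand side equals $k_a\,y\,e^{-y}$; differentiating gives $k_ae^{-y}(1-y)$, so the unique stationary point is $y=1$, i.e. $\gamma(u_d)=d_{\min}\zeta_0$, equivalently $u_d=\gamma^{-1}(d_{\min}\zeta_0)$, with peak value $k_ae^{-1}$. Hence the most permissive form of the condition on the attack-cost coefficient is $e^{-1}k_a\ge c_a$. I would close by clarifying the reading of ``for every $u_d$'': since $k_aye^{-y}\to0$ as $u_d\to\infty$, the bound cannot hold uniformly in $u_d$, so the corollary is to be understood as placing $c_a$ below the peak $k_ae^{-1}$, which guarantees the uniqueness condition of Prop.~\ref{prop:4}, and thereby the existence of the NE, over a nonempty range of $u_d$ around the maximizer.
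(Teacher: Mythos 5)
The paper states this corollary without a proof, so there is nothing to match your argument against; judged on its own terms, your outer steps are sound but the central reduction has a real gap, which you yourself flag. The specialization $C_a'(u_a)=c_au_a$, $\zeta'=k_a$ is right, and the final optimization is correct: with $y=\gamma(u_d)/(d_{\min}\zeta_0)$ the left-hand side is $k_a\,y\,e^{-y}$, maximized at $y=1$ with value $k_ae^{-1}$; your observation that ``for every $u_d$'' cannot be read literally (the left-hand side vanishes as $u_d\to\infty$) is also apt. The problem is the middle step. Your endpoint comparison---supremum of the marginal cost, $c_a\tilde u_{a,0}$, against the infimum of the marginal benefit at $u_a=0$---is a legitimate sufficient condition, but with $\tilde u_{a,0}=(\gamma(u_d)-2d_{\min}\zeta_0)/(2d_{\min}k_a)$ it evaluates to
\begin{equation*}
c_a\;\le\; e^{-\gamma(u_d)/(d_{\min}\zeta_0)}\,\frac{2\gamma(u_d)\,k_a^2}{\zeta_0^2\,(\gamma(u_d)-2d_{\min}\zeta_0)},
\end{equation*}
whose right-hand side differs from the corollary's by the factor $\zeta_0(\gamma(u_d)-2d_{\min}\zeta_0)/(2d_{\min}k_a)$. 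That factor can lie on either side of $1$, so your bound neither implies nor is implied by the stated one, and no amount of ``tracking how $\tilde u_{a,0}$ scales with $\zeta_0$'' closes the gap, since the discrepancy also involves $k_a$ and $\gamma$.

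The mismatch is not merely an artifact of your write-up. Substituting $v=\zeta(u_a)$, the exact threshold demanded by Prop.~\ref{prop:4} is
\begin{equation*}
c_a\;\le\;\min_{v\in(\zeta_0,\,\gamma/(2d_{\min}))}\;\frac{\gamma k_a^2\,e^{-\gamma/(d_{\min}v)}}{d_{\min}\,(v-\zeta_0)\,v^2},
\end{equation*}
which is of order $k_a^2$, whereas the corollary's bound $k_aye^{-y}$ is of order $k_a$; for small $k_a$ the corollary therefore admits values of $c_a$ for which the pointwise condition of Prop.~\ref{prop:4} fails. In other words, the displayed inequality in the corollary is $\zeta_0\cdot\partial\bar I/\partial u_a$ evaluated at $u_a=0$---a rescaled marginal benefit at the origin---and cannot be reached as a sufficient condition by any correct chain of bounds. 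The honest output of your approach is the $k_a^2$ inequality above (or the exact minimization); you should either present that as the corrected statement or say explicitly that the corollary's form is a heuristic simplification rather than something you can derive.
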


Following Prop.~\ref{prop:4}, we develop the following iterative algorithm to find an NE of the cyber defense game.

\begin{algorithm}
\KwIn Degree distribution $p$, $\gamma, \zeta$, $\epsilon$ \;
$\langle k \rangle \gets \sum_k k p(k)$, $\langle k^2 \rangle \gets \sum_k k^2 p(k)$ \;
$i \gets 0$ \;
Initialize $u_{d,(0)}, u_{a,(0)}$ such that $\frac{\gamma(u_{d,(0)})}{\zeta(u_{a,(0)})} < \frac{\langle k^2 \rangle}{\langle k \rangle}$ \;
\While{$i < i_{\max}$}{
    Solve $BR_d(u_{a,(i)}) = \arg\min_{u_d} L_d(u_d, u_{a,(i)})$ using gradient descent \;
    Solve $BR_a(u_{d,(i)}) = \arg\min_{u_a} L_a(u_{d,(i)}, u_a)$ using gradient descent \;

    \uIf{$\| BR_d(u_{a,(i)}) - u_{d,(i)} \| < \epsilon$ \KwAnd $\| BR_a(u_{d,(i)}) - u_{a,(i)} \| < \epsilon$}{
        $u_d^{\mathrm{NE}} \gets BR_d(u_{a,(i)})$ \;
        $u_a^{\mathrm{NE}} \gets BR_a(u_{d,(i)})$ \;
        \KwBreak;
    }
    
    $u_{d, (i+1)} \gets BR_d(u_{a,(i)})$; \\
    $u_{a, (i+1)} \gets BR_d(u_{a,(i)})$ \;
    $i \gets i + 1$ \;
}
\KwOut NE of the cyber defense game $(u_d^{\mathrm{NE}}, u_a^{\mathrm{NE}})$.
\caption{Iterative method to find NE}
\label{alg:0}
\end{algorithm}

The NE of the cyber defense game provides a cyber-resilient strategy for the defender to counteract the botnet attack in IoT device networks. Under the equilibrium strategy, the malware infection reaches a steady state $\bar{I}(u_d^{\mathrm{NE}}, u_a^{\mathrm{NE}})$, which leads to a systemic risk $\bar{R}=\bar{I}(u_d^{\mathrm{NE}}, u_a^{\mathrm{NE}}) \cdot N_d \cdot W_d$. The systemic risk $\bar{R}$ is the maximum amount of vulnerable loads that the attacker can control in the power grid, which provides the upper bound for the attacker to implement malicious load manipulation at the physical layer. The system operator then needs to develop resilient grid regulation strategies by considering the consequences of the cyber layer interactions, which will be pursued in the next section.

\section{Dynamic Game for Physical Resilience}\label{sec:physical_design}
The systemic risk $\bar{R}$ at the steady state denotes an aggregated level of IoT-controlled energy devices that can be compromised. To quantify the individual risk of each bus, we first need to know how these IoT-controlled energy devices are distributed over the power network. 

Recall that $\mathcal{L} = \{l_1,\dots,l_{N_L} \}$ denotes the set of load buses in the grid. Let $\rho \in [0,1]^{N_L}$ be the distribution of IoT-controlled energy devices, where $\sum_{i=1}^{N_L} \rho_i = 1$. Then, $\rho_i$ indicates the proportion of the IoT-operated devices presented at bus $i$, and the corresponding risk of bus $i$ can be expressed by $\bar{R}\cdot \rho_i := P^{LV}_i$. For example, when these vulnerable IoT-controlled energy devices are uniformly distributed in the network, each bus faces the same level of risks $\bar{R}/N_L$. This quantification indicates the maximum load the adversary at each bus can control. 

After identifying how much load can be altered, the attacker determines its attack strategy $\mathbf{P}^a$ to destabilize the grid, which could result in generator shutdown and even cascading failures. $P^a_i(t)$ the amount of load changed by the attacker at bus $i$ at time $t$, $i\in \mathcal{L}$. The attack strategy needs to consider the feasibility constraint, i.e., $P^a_i(t)\leq \bar{R}\cdot \rho_i$, during the execution of the attack. 

We use discrete system \eqref{eqn:dyn_3} to achieve grid control.
The system operator regulates the input power $\{\mathbf{P}^d_t\}_{t\in\mathbb{Z}_+}$ to counteract malicious load manipulation by anticipating the strategic attack behavior. Therefore, the operator can decide the regulation strategy to stabilize the system using the following min-max controller:
\begin{equation}
\label{eqn:minmax}
\begin{split}
    \min_{\{ \mathbf{P}^d \}} \max_{\{\mathbf{P}^a\}} \quad & J \left( \{\mathbf{P}^d \}, \{\mathbf{P}^a\} \right) := \| \mathbf{x}_T \|^2_{Q_f} \\
    & \quad+ \sum_{t=0}^{T-1} \left( \| \mathbf{x}_t \|^2_{Q} + \| \mathbf{P}^d_t \|^2_{R_d} - \| \mathbf{P}^a_t \|^2_{R_a} \right) \\ 
    \text{s.t.} \quad & \mathbf{x}_{t+1} = \tilde{A} \mathbf{x}_t + \tilde{B}_d \mathbf{P}^d_t + \tilde{B}_a \mathbf{P}^a_t + \tilde{c},  \\ 
    & P^a_{t,i} \leq \bar{R} \cdot \rho, \quad \forall i \in \mathcal{L}, \ \forall t.
\end{split}
\end{equation}
Here, the state $\mathbf{x}_t$ and the controls $\mathbf{P}^d_t, \mathbf{P}^a_t$ align with the discrete dynamical system \eqref{eqn:dyn_3}. $Q \succeq 0, Q_f \succeq 0, R_d \succ 0, R_a \succ 0$ are state and control cost matrices with proper dimensions. For simplicity, we denote $\{ \mathbf{P}^d \} := \{ \mathbf{P}^d_t \}_{t=0}^{T-1}$ and $\{ \mathbf{P}^a \} := \{ \mathbf{P}^a_t \}_{t=0}^{T-1}$. 
The defender aims to stabilize the system using additional regulation power $\{\mathbf{P}^d\}$, while the attacker wants to destabilize the system by manipulating the vulnerable load $\{ \mathbf{P}^a \}$. The interactions between the defender and attacker at the physical layer naturally constitute a dynamic game. Note that the attacker can both increase and decrease the vulnerable loads within the range of the feasibility constraints since the IoT devices may be in use when compromised.

\begin{remark}
The diagonal weighting matrices, $R_d$ and $R_a$, quantify the resource costs associated with grid regulation and conducting an attack. The value of each diagonal element reflects the level of effort to generate power or manipulate the load at the corresponding bus.
\end{remark}

The attacker's feasibility constraints raise challenges in solving the problem \eqref{eqn:minmax}. We use log barrier functions to penalize the constraint violation and add them to the objective function. Then, the modified objective becomes
\begin{equation*}
    \tilde{J} \left( \{\mathbf{P}^d \}, \{\mathbf{P}^a\} \right) = J+ \sum_{t=0}^{T-1} \sum_{i=1}^{N_L} \frac{1}{\mu} \log (\bar{R}\cdot \rho_i - P^a_{t,i}).
\end{equation*}
The defender can instead use a modified controller resulting from the following problem to regulate the grid under the IoT botnet attack:
\begin{equation}
\label{eqn:minmax_1}
\begin{split}
    \min_{\{ \mathbf{P}^d \}} \max_{\{\mathbf{P}^a\}} \quad & \tilde{J} \left( \{\mathbf{P}^d \}, \{\mathbf{P}^a\} \right) \\ 
    \text{s.t.} \quad & \mathbf{x}_{t+1} = \tilde{A} \mathbf{x}_t + \tilde{B}_d \mathbf{P}^M_t + \tilde{B}_a \mathbf{P}^a_t + \tilde{c}.
\end{split}
\end{equation}

The min-max control formulation in \eqref{eqn:minmax_1} defines a dynamic Nash game \cite{bacsar1998dynamic} at the physical layer, which we call the physical dynamic game. We use open-loop NE as the solution concept to solve \eqref{eqn:minmax_1}. 

\begin{definition}[Open-loop Nash Equilibrium of Physical Dynamic Game]
A strategy trajectory pair $(\{ \mathbf{P}^{d*} \}, \{ \mathbf{P}^{a*} \})$ constitutes an open-loop NE of the dynamic game \eqref{eqn:minmax_1} if
\begin{equation*}
    \tilde{J}(\{ \mathbf{P}^{d*}\}, \{ \mathbf{P}^a\}) \leq \tilde{J}(\{ \mathbf{P}^{d*}\}, \{ \mathbf{P}^{a*}\}) \leq \tilde{J}(\{ \mathbf{P}^d\}, \{ \mathbf{P}^{a*}\}) 
\end{equation*}
for all $\{ \mathbf{P}^d\}$ and $\{ \mathbf{P}^a\}$ with $\mathbf{P}^a_t \leq \bar{R} \cdot \rho$, $t=0,\dots, T-1$.
\end{definition}

The following proposition specifies the existence condition of the equilibrium solution to the physical dynamic game.

\begin{proposition} \label{prop:5}
The min-max problem \eqref{eqn:minmax_1} admits a unique open loop NE if $R_d + \tilde{B}_d^\tp S^d_{t+1} \tilde{B}_d \succ 0$ for $t=0,\dots, T-1$, where $S^d$ is updated by 
\begin{equation*}
\begin{split}
    S^d_{t} &= Q + \tilde{A}^\tp S^d_{t+1} \tilde{A} \\
    &- \tilde{A}^\tp S^d_{t+1} \tilde{B}_d (R_d + \tilde{B}_d^\tp S^d_{t+1} \tilde{B}_d)^{-1} \tilde{B}_d^\tp S^d_{t+1} \tilde{A}
\end{split}
\end{equation*}
with $S^d_T = Q_f$, and $R_a - \tilde{B}_a^\tp S^a_{t+1} \tilde{B}_a \succ 0$ for $t=0,\dots, T-1$, where $S^a$ is updated by
\begin{equation*}
\begin{split}
    S^a_{t} &= Q + \tilde{A}^\tp S^a_{t+1} \tilde{A} \\
    &+ \tilde{A}^\tp S^a_{t+1} \tilde{B}_a (R_a - \tilde{B}_a^\tp S^a_{t+1} \tilde{B}_a)^{-1} \tilde{B}_a^\tp S^a_{t+1} \tilde{A}
\end{split}
\end{equation*}
with $S^a_T = Q_f$.
\end{proposition}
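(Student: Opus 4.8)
The plan is to treat \eqref{eqn:minmax_1} as a finite-horizon linear-quadratic zero-sum dynamic game and to characterize its open-loop NE as the unique saddle point of $\tilde{J}$, following the variational and completion-of-squares machinery for such games \cite{bacsar1998dynamic}. The open-loop NE is a pair $(\{\mathbf{P}^{d*}\}, \{\mathbf{P}^{a*}\})$ at which no unilateral deviation helps either player; since the constraint set enters only through the concave log-barrier term, which depends solely on $\mathbf{P}^a$, the saddle-point structure of the underlying quadratic game is preserved. I would therefore establish the result by showing that $\tilde{J}$ is strictly convex in $\{\mathbf{P}^d\}$ for every fixed $\{\mathbf{P}^a\}$ and strictly concave in $\{\mathbf{P}^a\}$ for every fixed $\{\mathbf{P}^d\}$, and then invoke a convex-concave saddle-point existence-and-uniqueness theorem.

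First I would substitute the dynamics recursively to write the state trajectory $\{\mathbf{x}_t\}$ as an affine map of the two control sequences, so that $\tilde{J}$ becomes an explicit function of $(\{\mathbf{P}^d\}, \{\mathbf{P}^a\})$. For the defender, I would verify strict convexity in $\{\mathbf{P}^d\}$ by backward induction: defining the value-to-go and completing the square stage by stage produces exactly the Riccati recursion for $S^d_t$, and the positivity needed at stage $t$ is precisely $R_d + \tilde{B}_d^\tp S^d_{t+1}\tilde{B}_d \succ 0$, which also propagates $S^d_t \succeq 0$ from the terminal condition $S^d_T = Q_f \succeq 0$. Because the barrier term does not involve $\mathbf{P}^d$, this step shows the defender faces a strictly convex problem with a unique best response for any attacker trajectory.

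The crux is the attacker's side. Fixing $\{\mathbf{P}^d\}$, the attacker maximizes $\|\mathbf{x}_T\|^2_{Q_f} + \sum_{t}(\|\mathbf{x}_t\|^2_Q - \|\mathbf{P}^a_t\|^2_{R_a})$ plus the concave barrier. The quadratic part is a competition between the state-penalty terms, which are convex and hence work \emph{against} concavity in $\mathbf{P}^a$, and the term $-\|\mathbf{P}^a_t\|^2_{R_a}$. I would run the backward completion-of-squares and show that the accumulated state-penalty contribution at time $t$ is captured exactly by $S^a_{t+1}$, so that strict concavity at each stage is equivalent to $R_a - \tilde{B}_a^\tp S^a_{t+1}\tilde{B}_a \succ 0$; this yields the Riccati recursion for $S^a_t$ and keeps the recursion well-defined over the whole horizon. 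The log-barrier term, being concave in $\mathbf{P}^a$, only reinforces strict concavity. The \textbf{main obstacle} lies here: one must verify that the stage-wise condition $R_a - \tilde{B}_a^\tp S^a_{t+1}\tilde{B}_a \succ 0$ is not merely locally sufficient but is equivalent to global strict concavity of the attacker's objective in $\{\mathbf{P}^a\}$, which requires tracking the sign of the full Hessian through the backward recursion and confirming that the inverse in the Riccati update exists at every $t = 0,\dots,T-1$.

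Finally, with $\tilde{J}$ strictly convex in $\{\mathbf{P}^d\}$ and strictly concave in $\{\mathbf{P}^a\}$ over the convex admissible sets, I would invoke the standard saddle-point theorem for convex-concave games to conclude both existence and uniqueness of the open-loop NE. Equivalently, the first-order stationarity and costate conditions reduce to a single linear two-point boundary-value problem whose unique solution is delivered in feedback form by the two Riccati sequences $S^d$ and $S^a$; uniqueness then follows from the strictness of the convexity and concavity established above.
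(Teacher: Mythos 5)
Your proposal is correct and follows essentially the same route as the paper's proof: establish strict convexity of $\tilde{J}$ in $\{\mathbf{P}^d\}$ and strict concavity in $\{\mathbf{P}^a\}$ by reducing each to the associated single-player LQ problem whose solvability is governed by the stated Riccati recursions (noting the log-barrier is concave and involves only $\mathbf{P}^a$), then invoke a minimax/saddle-point theorem together with coercivity to get existence and uniqueness of the open-loop NE. The paper's argument is just a terser version of your completion-of-squares/backward-induction plan.
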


\begin{proof}
We first show the sufficient conditions for the objective function $\tilde{J}$ to be convex in $\{ \mathbf{P}^d \}$ and concave in $\{ \mathbf{P}^a \}$.
Note that the objective function $\tilde{J}$ is quadratic in $\{ \mathbf{P}^d \}$. We want to show for any $\{ \mathbf{P}^A \}$, $\tilde{J}$ is strictly convex in $\{ \mathbf{P}^d \}$, which is equivalent to the following optimal control problem. The Hessian of $\tilde{J}$ is not related to the constant term $\tilde{c}$ and $\{ \mathbf{P}^a \}$. So it is equivalent to setting them to 0. We obtain 
\begin{equation*}
\begin{split}
    \min_{\{\mathbf{P}^d\}} \quad & \| \mathbf{x}_T \|^2_{Q_f} + \sum_{t=0}^{T-1} \left( \| \mathbf{x}_t \|^2_{Q} + \| \mathbf{P}^d_t \|^2_{R_d} \right) \\ 
    \text{s.t.} \quad & \mathbf{x}_{t+1} = \tilde{A} \mathbf{x}_t + \tilde{B}_d \mathbf{P}^d_t.
\end{split}
\end{equation*}
The optimal control problem has a unique solution if and only if the Riccati equation holds, which is 
\begin{equation*}
\begin{split}
    S^d_{t} &= Q + \tilde{A}^\tp S^d_{t+1} \tilde{A} \\
    &- \tilde{A}^\tp S^d_{t+1} \tilde{B}_d (R_d + \tilde{B}_d^\tp S^d_{t+1} \tilde{B}_d)^{-1} \tilde{B}_d^\tp S^d_{t+1} \tilde{A}
\end{split}
\end{equation*}
with $S^d_T = Q_f$, and $R_d + \tilde{B}_d^\tp S^d_{t+1} \tilde{B}_d \succ 0$, $t=0,\dots,T-1$. The latter guarantees the convexity of the value function. Otherwise, it becomes unbounded.

Likewise, we show that $\tilde{J}$ is concave in $\{ \mathbf{P}^a \}$ for any $\{ \mathbf{P}^d \}$. We note that besides the quadratic terms, the log barrier terms only contain $\{ \mathbf{P}^a \}$ and are readily concave in $\{ \mathbf{P}^a \}$. Therefore, it is sufficient to ignore the log functions and only consider quadratic terms in $\{\mathbf{P}^a\}$. Following the same arguments as $\{\mathbf{P}^d\}$, we aim to show that the following optimal control problem has a unique solution:
\begin{equation*}
\begin{split}
    \max_{\{\mathbf{P}^a\}} \quad & \| \mathbf{x}_T \|^2_{Q_f} + \sum_{t=0}^{T-1} \left( \| \mathbf{x}_t \|^2_{Q} - \| \mathbf{P}^a_t \|^2_{R_a} \right) \\ 
    \text{s.t.} \quad & \mathbf{x}_{t+1} = \tilde{A} \mathbf{x}_t + \tilde{B}_a \mathbf{P}^a_t,
\end{split}
\end{equation*}
which provides the Riccati equations 
\begin{equation*}
\begin{split}
    S^a_{t} &= Q + \tilde{A}^\tp S^a_{t+1} \tilde{A} \\
    &+ \tilde{A}^\tp S^a_{t+1} \tilde{B}_a (R_a - \tilde{B}_a^\tp S^a_{t+1} \tilde{B}_a)^{-1} \tilde{B}_a^\tp S^a_{t+1} \tilde{A}
\end{split}
\end{equation*}
with $S^a_T = Q_f$, and $R_a - \tilde{B}_a^\tp S^a_{t+1} \tilde{B}_a \succ 0$, $t=0,\dots,T-1$.

Besides, $\tilde{J}\to \infty$ as $|\{\mathbf{P}^d\}| \to \infty$ for any $\{\mathbf{P}^a\}$ and $\tilde{J}\to -\infty$ as $|\{\mathbf{P}^a\}| \to \infty$ for any $\{\mathbf{P}^d\}$. Using the minimax theorem, the game \eqref{eqn:minmax_1} admits a unique open-loop NE $(\{\mathbf{P}^{d*}\}, \{\mathbf{P}^{a*}\})$.
\end{proof}

From Prop.~\ref{prop:5}, the state cost matrix $Q$ does not have to be positive definite or negative definite. As long as the conditions are satisfied, the Nash game \eqref{eqn:minmax_1} admits an open-loop NE.
Next, we find the necessary conditions to compute the open-loop NE of \eqref{eqn:minmax_1}. 

\begin{proposition}
Suppose the convex-concave condition in Prop.~\ref{prop:5} is satisfied. Further assume $R_a \succ 0$ and $R_d \succ 0$ are diagonal matrices with element $r_{a,i}$ and $r_{d,i}$. Then, $\{ \mathbf{P}^{d*}\}$ and $\{ \mathbf{P}^{a*} \}$ is an open-loop NE of \eqref{eqn:minmax_1} if and only if there exists a trajectory $\{ \mathbf{x}_t \}_{t=1}^T$ and $\{ \lambda_t \}_{t=1}^T$ such that 
\begin{equation*}
\begin{split}
    &\mathbf{x}_{t+1} = \tilde{A} \mathbf{x}_t + \tilde{B}_d \mathbf{P}^{d*}_t + \tilde{B}_a \mathbf{P}^{a*}_t + \tilde{c}, \quad \mathbf{x}_0 \text{given},\\
    &\lambda_t = \tilde{A}^\tp \lambda_{t+1} + 2Q \mathbf{x}_t, \quad \lambda_T = 2Q_f \mathbf{x}_T,\\
    &\mathbf{P}^{d*}_t = -\frac{1}{2} R_d^{-1} \tilde{B}_d^\tp \lambda_{t+1}, \\
    &P^{a*}_{t,i} = \frac{2r_{i} \bar{R}\rho_i + [B_a^\tp \lambda_{t+1}]_i}{4 r_i} - \frac{1}{4r_i} \Big[ (2r_{i} \bar{R}\rho_i + [B_a^\tp \lambda_{t+1}]_i)^2  \\
    & \hspace{1.5cm} + 8r_i ([B_a^\tp \lambda_{t+1}]_i \bar{R}\rho_i - 1/\mu) \Big]^{1/2}, \ i=1,\dots, N_L.
\end{split}
\end{equation*}
\end{proposition}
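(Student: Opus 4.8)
The plan is to leverage the global convex–concave structure already established in Prop.~\ref{prop:5}. Under the stated Riccati conditions, $\tilde{J}$ is strictly convex in $\{\mathbf{P}^d\}$ for every fixed $\{\mathbf{P}^a\}$ and strictly concave in $\{\mathbf{P}^a\}$ for every fixed $\{\mathbf{P}^d\}$, where the log-barrier terms only reinforce the latter concavity. Consequently the open-loop NE coincides with the \emph{unique} saddle point, so the first-order stationarity conditions of the associated Lagrangian are simultaneously necessary and sufficient for characterizing it. The strategy is therefore: (i) write down the stationarity system obtained by adjoining the dynamics with a costate sequence, and (ii) observe that, thanks to the convex–concave structure, any trajectory solving that system is exactly the open-loop NE, which delivers both directions of the ``if and only if'' at once.

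To set up (i), I would introduce a costate sequence $\{\lambda_t\}_{t=1}^{T}$ to enforce the state recursion and form
\[
\mathcal{L} = \|\mathbf{x}_T\|^2_{Q_f} + \sum_{t=0}^{T-1}\Big(\|\mathbf{x}_t\|^2_{Q} + \|\mathbf{P}^d_t\|^2_{R_d} - \|\mathbf{P}^a_t\|^2_{R_a} + \sum_{i}\tfrac{1}{\mu}\log(\bar{R}\rho_i - P^a_{t,i})\Big) + \sum_{t=0}^{T-1}\lambda_{t+1}^{\tp}\big(\tilde{A}\mathbf{x}_t + \tilde{B}_d\mathbf{P}^d_t + \tilde{B}_a\mathbf{P}^a_t + \tilde{c} - \mathbf{x}_{t+1}\big).
\]
Setting $\partial\mathcal{L}/\partial\mathbf{x}_t = 0$ for $1\le t\le T-1$ yields the adjoint recursion $\lambda_t = \tilde{A}^{\tp}\lambda_{t+1} + 2Q\mathbf{x}_t$, while $\partial\mathcal{L}/\partial\mathbf{x}_T = 0$ produces the transversality condition $\lambda_T = 2Q_f\mathbf{x}_T$. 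Stationarity in $\mathbf{P}^d_t$ is linear and, using $R_d\succ 0$, inverts directly to $\mathbf{P}^{d*}_t = -\tfrac{1}{2} R_d^{-1}\tilde{B}_d^{\tp}\lambda_{t+1}$. Because $R_a$ and $R_d$ are assumed diagonal, the attacker's stationarity decouples component-wise, giving for each $i$ the scalar equation $-2r_{i}P^a_{t,i} - \tfrac{1}{\mu(\bar{R}\rho_i - P^a_{t,i})} + [\tilde{B}_a^{\tp}\lambda_{t+1}]_i = 0$.

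The main obstacle is this last condition, which is nonlinear because of the barrier. The key step is to multiply through by the strictly positive factor $(\bar{R}\rho_i - P^a_{t,i})$ on the feasible open set, turning it into the quadratic $2r_i(P^a_{t,i})^2 - (2r_i\bar{R}\rho_i + [\tilde{B}_a^{\tp}\lambda_{t+1}]_i)P^a_{t,i} + ([\tilde{B}_a^{\tp}\lambda_{t+1}]_i\bar{R}\rho_i - 1/\mu) = 0$, whose quadratic-formula solution gives the claimed closed form for $P^{a*}_{t,i}$. The delicate point is \emph{root selection}: the quadratic has two roots, and I would argue that the strict concavity of the attacker's per-component objective (the term $-r_i(P^a_{t,i})^2$ plus the strictly concave barrier, which blows up to $-\infty$ as $P^a_{t,i}\to\bar{R}\rho_i$) guarantees a unique interior maximizer, so exactly one root lies in $(-\infty,\bar{R}\rho_i)$; this picks out the branch with the minus sign before the radical. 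Finally, for sufficiency I would invoke Prop.~\ref{prop:5} once more: since $\tilde{J}$ is globally strictly convex–concave over the defender's whole space and the attacker's open feasible box, the saddle point is unique, so any trajectory satisfying the above system must be it. This closes both directions and completes the characterization.
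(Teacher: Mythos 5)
Your proposal is correct and follows essentially the same route as the paper: both adjoin the dynamics via a costate sequence (your Lagrangian stationarity system is exactly the paper's Hamiltonian/Pontryagin conditions), obtain the same adjoint recursion and transversality condition, solve the defender's linear first-order condition, clear the barrier denominator to reduce the attacker's condition to the same component-wise quadratic and select the root below $\bar{R}\rho_i$, and invoke the convex--concave structure of Prop.~5 for sufficiency and uniqueness. Your root-selection argument via the barrier blowing up at the boundary is in fact slightly more careful than the paper's one-line justification, but the substance is identical.
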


\begin{proof}
The necessary condition for $(\{ \mathbf{P}^{d*}\}, \{ \mathbf{P}^{a*} \})$ being an open-loop NE is that there exist trajectories $\{ \mathbf{x}_t \}_{t=0}^{T}$ and $\{ \lambda_t \}_{t=1}^T$ such that
\begin{equation}
\label{eqn:necessary_cond}
\begin{split}
    &\mathbf{x}_{t+1} = \tilde{A} \mathbf{x}_t + \tilde{B}_d \mathbf{P}^{d*}_t + \tilde{B}_a \mathbf{P}^{a*}_t + \tilde{c}, \quad \mathbf{x}_0 \text{ given}, \\
    &H_t(\mathbf{x}_t, \mathbf{P}^{d*}_t, \mathbf{P}^a_t, \lambda_{t+1}) \leq H_t(\mathbf{x}_t, \mathbf{P}^{d*}_t, \mathbf{P}^{a*}_t, \lambda_{t+1}) \\
    & \qquad \leq H_t(\mathbf{x}_t, \mathbf{P}^d_t, \mathbf{P}^{a*}_t, \lambda_{t+1}), \quad \forall \mathbf{P}^d_t, \ \forall \mathbf{P}^a_t, \\ 
    &\lambda_t = \nabla_x H_t(\mathbf{x}_t, \mathbf{P}^{d*}_t, \mathbf{P}^{a*}_t, \lambda_{t+1}), \quad \lambda_T = 2Q_f \mathbf{x}_T,
\end{split}
\end{equation}
where the Hamiltonian $H_t$, $t=0,\dots, T-1$, is given by
\begin{equation*}
\begin{split}
    H_t(\mathbf{x}_t, \mathbf{P}^{d}_t, \mathbf{P}^{a}_t, \lambda_{t+1}) = \lambda_{t+1}^\tp \left( \tilde{A} \mathbf{x}_t + \tilde{B}_d \mathbf{P}^d_t + \tilde{B}_a \mathbf{P}^a_t + \tilde{c} \right) \\
    + \| \mathbf{x}_t \|^2_{Q} + \| \mathbf{P}^d_t \|^2_{R_d} - \| \mathbf{P}^a_t \|^2_{R_a} + \sum_{i=1}^{N_L} \frac{1}{\mu} \log (\bar{R}\cdot \rho_i - P^a_{t,i}).
\end{split}
\end{equation*}
We note that the $H_t$ is convex in $\mathbf{P}^d_t$ and concave in $\mathbf{P}^d_t$. Then we can use the first order condition to simplify the second inequality in \eqref{eqn:necessary_cond} and obtain
\begin{equation*}
    \mathbf{P}^{d*}_t = -\frac{1}{2} R_d^{-1} B_d^\tp \lambda_{t+1}.
\end{equation*}
Using the assumption that $R_a$ is diagonal, we check the first inequality elementwise that 
\begin{equation*}
    2 r_i (P^a_{t,i})^2 - (2 r_i \bar{R}\rho_i + [\tilde{B}_a^\tp \lambda_{t+1}]_i) P^a_{t,i} +\bar{R}\rho_i [\tilde{B}_a^\tp \lambda_{t+1}]_i - \frac{1}{\mu} = 0,
\end{equation*}
where $[\tilde{B}_a^\tp \lambda_{t+1}]_i$ represents the $i$-th element of $\tilde{B}_a^\tp \lambda_{t+1}$. Since $P^a_{t,i} \leq \bar{R} \cdot \rho_i$, we only take the negative root and arrive at the result in the proposition.
Since the convex-concave condition is also satisfied, the necessary conditions are also sufficient, which means that the solution $\{ \mathbf{P}^{d*}\}$ and $\{ \mathbf{P}^{a*} \}$ is unique if the conditions \eqref{eqn:necessary_cond} are met.
\end{proof}

Therefore, we can solve a feasibility problem with necessary conditions \eqref{eqn:necessary_cond} to find the unique open-loop NE of the game \eqref{eqn:minmax_1}. To facilitate computation, we can first solve the following unconstrained dynamic Nash game 
\begin{equation*}
\begin{split}
    \min_{\{ \mathbf{P}^d \}} \max_{\{\mathbf{P}^a\}} \quad & J \left( \{\mathbf{P}^d \}, \{\mathbf{P}^a\} \right) \\ 
    \text{s.t.} \quad & \mathbf{x}_{t+1} = \tilde{A} \mathbf{x}_t + \tilde{B}_d \mathbf{P}^M_t + \tilde{B}_a \mathbf{P}^a_t + \tilde{c},
\end{split}
\end{equation*}
whose necessary conditions are linear equations. Then, we use the solution as the initial guess to iteratively refine the solution of \eqref{eqn:necessary_cond} by increasing the barrier parameter $\mu$. The refined solution gradually approaches the open-loop NE of \eqref{eqn:minmax}. We summarize the procedure in Alg.~\ref{alg:1} below.

\begin{algorithm}
\KwIn Initial state $x_0$ \;
$\alpha \gets 5$, $\mu \gets 2$ \;
$\{\mathbf{P}^d\}^{(0)}, \{\mathbf{P}^a\}^{(0)} \gets$ solve an unconstrained min-max controller with $x_0$ as an initial guess\;
$n \gets 0$ \;
\While{$n < n_{\max}$ }{
    $\{\mathbf{P}^d\}^{(n+1)}$, $\{\mathbf{P}^a\}^{(n+1)}$ $\gets$ solving \eqref{eqn:minmax_1} with initial conditions $\{\mathbf{P}^d\}^{(n)}$ and $\{\mathbf{P}^a\}^{(n)}$ and $x_0$ \;
    $\mu \gets \alpha \mu$ \;
    $n \gets n+1$ \;
}
\KwOut Approximate NE $(\{\mathbf{P}^d\}^{(n)}, \{\mathbf{P}^a\}^{(n)})$.
\caption{Iterative refinement of the open-loop NE.}
\label{alg:1}
\end{algorithm}

\begin{remark}
In the Nash games \eqref{eqn:minmax} and \eqref{eqn:minmax_1}, we assume that the attacker has full knowledge of the grid operator, including the model and defensive strategy. In practice, the attacker can utilize either publicly available grid data to estimate the operator's parameters or adversarial approaches like social engineering to obtain critical information. Recent advances in learning-based techniques (e.g., \cite{wang2020multi,xie2020imitation,wang2021review}) also offer possibilities for parameter estimation.
\end{remark}

Since the defender uses the open-loop NE to strategically regulate the grid to counteract the malicious load manipulation, we develop a receding horizon planning algorithm for the defender to generate successive power regulation strategies as time evolves, which is summarized in Alg.~\ref{alg:2}.

\begin{algorithm}
\While{botnet attack detection}{
\tcp{botnet attack detection loop, periodic detect or set manually}
        Run cyber defense and get systematic risk $\bar{R}$ \;
        Identify current state $x_0$ \;
        Control time step $t \gets 0$ \;
    \While{No new detection}{
        $(\{\mathbf{P}^{d*}\}, \{ \mathbf{P}^{a*}\}) \gets$ Run Alg.~\ref{alg:1} with $x_t$ \;
        Defender chooses $\mathbf{P}^{d*}_0$ \;
        Attacker decides $\mathbf{P}^a \leq \bar{R}\cdot\rho$ for load manipulation, may not play NE strategy $\mathbf{P}^{a*}_0$ \;
        $x_{t+1} \gets \tilde{A} \mathbf{x}_t + \tilde{B}_d \mathbf{P}^{d*}_0 + \tilde{B}_a \mathbf{P}^a + \tilde{c} $ from \eqref{eqn:dyn_3} \;
        $t \gets t+1$ \;
    }
}
\caption{Receding horizon planning for successive resilient control.}
\label{alg:2}
\end{algorithm}

\begin{remark}
In practice, the grid operator can employ Supervisory Control and Data Acquisition (SCADA) systems to monitor critical parameters such as power flow, load variations, and frequency levels across different buses. By examining factors like sudden load shifts or comparing them with historical data, the grid operator can detect potential load altering attacks. With this awareness, the operator can then formulate a targeted defense strategy to stabilize the grid.
\end{remark}

\section{Case Studies} \label{sec:case}

We use the IEEE-39 bus system (10 generator buses and 29 load buses) to showcase the results. The system parameters, including the transmission lines and the inertia and damping coefficients of generators, are the same as those provided in MATPOWER simulator\footnote{\url{https://matpower.org/docs/ref/matpower5.0/case39.html}}. The nominal system frequency is $\omega_{\text{n}} = 60$ Hz and the maximum deviation is $\omega_{\text{max}} = 2$ Hz, i.e., the generator's relay trips at 62 Hz (over-frequency) and 58 Hz (under-frequency). The grid dynamics are simulated according to \eqref{eqn:dyn_3}.
The set of vulnerable buses is $\mathcal{V}=\{6, 10,12,15,16, 19, 23, 29\}$, where massive IoT-controlled high-power devices are present. We assume the uniform distribution of IoT devices. 
The damping coefficient of each load is 10. The coefficients of the PI controller associated with generator buses are: ${\bf K}^P= \mathrm{diag}([20, 15, 15, 10, 10, 10, 30, 10, 25, 10])$, and $\mathbf{K}^I = \mathrm{diag}([45,35,45,60,35,58,30,30,50,50])$, where $\mathrm{diag}$ denotes the diagonal operator.
The per unit (p.u.) value of power is 100 MW.
The parameters associated with the cyber layer are below: $d_{\min} = 1$, $N_d = 10^7$, $W_d = 5,000$W.\footnote{We set the average power $W_d=5000$ to resemble IoT-enabled energy devices like cloth dryers and air conditioners. The total device number $N_d$ is based on the estimation of energy-related smart home appliances in NYC \cite{statista2023number,mariotti2023smart}.}
Thus, the maximum load of all the IoT high-power devices connected to the vulnerable buses is 500 p.u. The average vulnerable load in each bus in $\mathcal{V}$ is $62.5$ p.u. We adopt the secure load data from \cite{amini2016dynamic} and set secure load as $\mathbf{P}^{LS} = 171.7$ p.u.

\begin{remark}
The vulnerable load $\mathbf{P}^{LV}$ is the total load of IoT devices, which represents the upper limit of load manipulation available to an attacker. The attacker can only alter a portion of 500 p.u. unless he compromises every IoT device in the power grid. However, infecting all devices is infeasible because the defender also has cyber defense strategies to protect the IoT devices.
\end{remark}

The attacker's manipulation of IoT energy devices is dynamic. In the following case studies, we assume that the attacker can conduct a coordinated attack, i.e., the attacker can maliciously manipulate loads in all vulnerable load buses $\mathcal{V}$ simultaneously.

\subsection{Cyber Risk and Cyber Defense Game Assessment} \label{sec:case.cyber}
We first assess the cyber risk $I(t)$ of the power grid under an IoT botnet attack. Specifically, we consider $\gamma = 0.2$ and $\zeta = [0.2, 0.25, 0.3,0.4, 0.5]$ and simulate $I(t)$ in Fig.~\ref{fig:cyber_risk}. Here, $\gamma$ and $\zeta$ represent the equivalent cyber protection and attack capabilities. As depicted in Fig.~\ref{fig:cyber_risk.1}, the cyber risk $I(t)$ grows as $\zeta$ increases. It shows that the attacker can manipulate more energy devices and thus has additional flexibility in devising the IoT botnet attack with a larger $\zeta$.
Fig.~\ref{fig:cyber_risk.2} depicts the cyber risk $\bar{I}$ at the steady state, which increases with $\zeta$ for a fixed $\gamma$. 
We also plot the approximation function $e^{-\gamma/d_{\min}\zeta}$ of $\bar{I}$ in \eqref{eqn:I_equ_final}. The comparison with the accurate $\bar{I}$ shows that the approximation performance is satisfactory.

\begin{figure}
    \centering
    \subfigure[$I(t)$ evolution for different $\zeta$.]{
        \includegraphics[width=0.45\columnwidth]{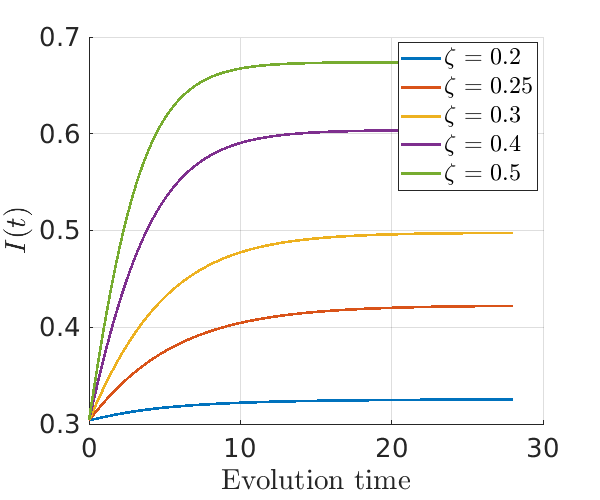}
        \label{fig:cyber_risk.1}
    }
    \subfigure[$\bar{I}$ and its approximation.]{
        \includegraphics[width=0.45\columnwidth]{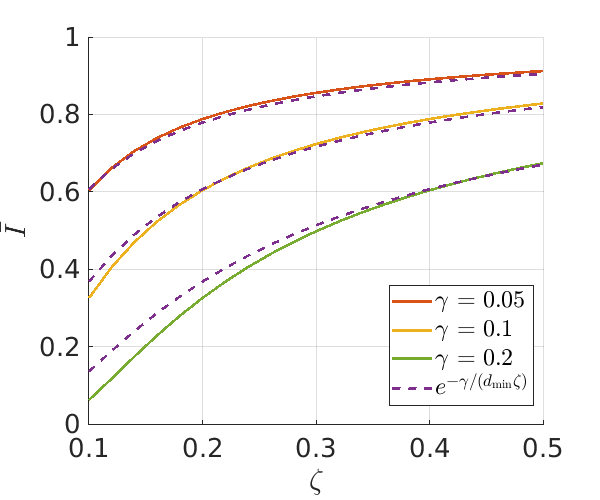}
        \label{fig:cyber_risk.2}
    }    
    \caption[]{(a) illustrates the percentage of the compromised IoT-controlled energy devices in the grid under the IoT botnet attacks with different attack intensities $\zeta$. (b) shows the resulting cyber risk $\bar{I}$ at the steady state as a function of $\zeta$ for fixed $\gamma$. The approximate function in \eqref{eqn:I_equ_final} yields satisfactory results for $\bar{I}$.}
  \label{fig:cyber_risk}
\end{figure}

In the cyber defense game, we set $\gamma(u_d) = \sqrt{u_d} + 0.1$ and $\zeta(u_a) = 2.5\log(u_a+1) + 0.1$ to represent the cyber defense/attack intensity by using the defense/attack effort $u_d$ and $u_a$. Note the choice of $\gamma$ and $\zeta$ are both concave in $u_d$ and $u_a$, which satisfy the conditions in Prop.~\ref{prop:3}. The cost functions are set as $C_d(u_d) = 0.2 u_d^2$ and $C_a(u_a) = 0.2 u_a^2$. 
We show two functions related to the attacker in Fig.~\ref{fig:cyber_game_ua} for detailed discussion. From Fig.~\ref{fig:cyber_game_ua.1} we observe that $\bar{I}(u_d, u_a)$ exhibits a convex-concave property in $u_a$ when $u_d$ is large (e.g., $u_d=1$ and $2$), which corroborates the results in Prop.~\ref{prop:3}. As $u_d$ reduces, $\bar{I}(u_d, u_a)$ becomes concave in $u_a$ because the convex-concave conditions are easier to meet. 
%
Following Prop.~\ref{prop:4}, Fig.~\ref{fig:cyber_game_ua.2} shows a unique maximizer $u_a^*$ of attacker's utility $L_a(u_d, u_a)$ for a fixed $u_d$, which is the attacker's optimal cyberattack strategy. Besides, $u^*_a$ increases as $u_d$ goes up, showing that the attacker has to put more attack effort into compromising more IoT devices if the defender escalates the cyber defense.

\begin{figure}[!t]
  \centering
  \subfigure[$\bar{I}$ changes with $u_a$.]{
    \includegraphics[width=0.48\columnwidth]{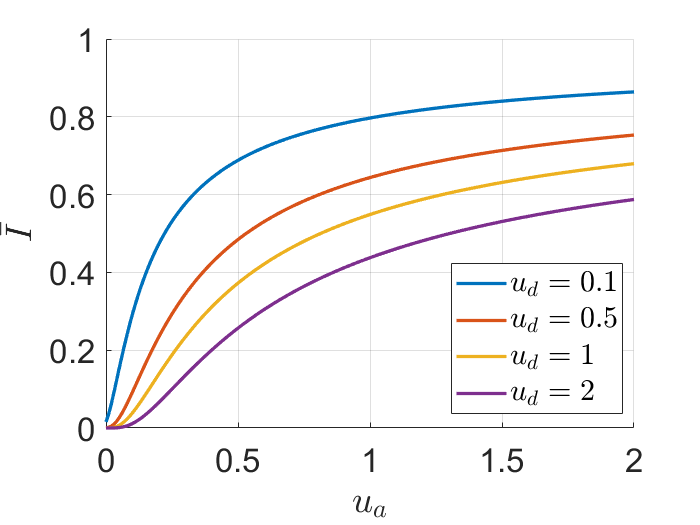}\label{fig:cyber_game_ua.1}}
	 \subfigure[$L_a$ changes with $u_a$.]{
    \includegraphics[width=0.48\columnwidth]{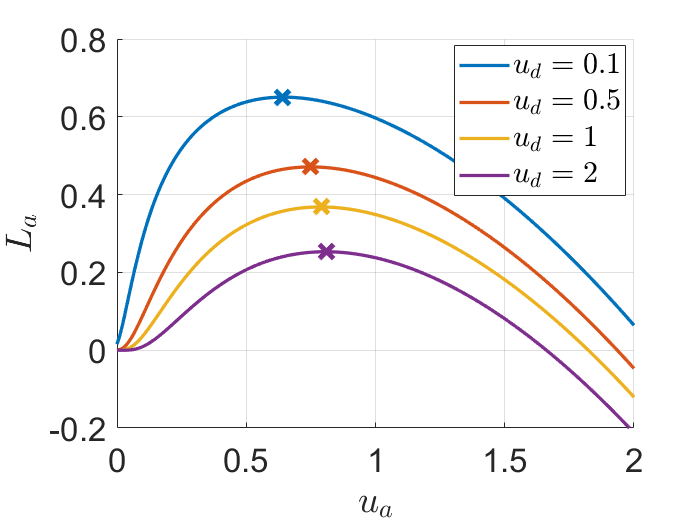}\label{fig:cyber_game_ua.2}}    
  \caption[]{(a) plots $\bar{I}(u_d,u_a)$ as a function of $u_a$ for fixed $u_d$. $\bar{I}$ behaves convex-concave for large $u_d$ and becomes concave as $u_d$ reduces. (b) shows that the attacker's utility $L_a(u_d, u_a)$ admits a unique maximizer $u_a^*$ for fixed $u_d$. $u_a^*$ increases with $u_d$, indicating that more attack effort is required when the defender escalates the cyber defense.}
  \label{fig:cyber_game_ua}
\end{figure}

To find the NE of the cyber defense game, the defender and attacker respond to each other's action optimally and repeatedly, as shown in Fig.~\ref{fig:cyber_game_nash}. The defender and attacker's optimal response functions are depicted in Fig.~\ref{fig:cyber_game_nash.1}. The intersection of two functions shows that the game admits a NE $(u_d^{\mathrm{NE}}, u_a^{\mathrm{NE}}) = (0.58, 0.76)$, which specifies the defense and attack strategies at the cyber layer.
Fig.~\ref{fig:cyber_game_nash.2} shows that Alg.~\ref{alg:0} successfully converges to the NE in Fig.~\ref{fig:cyber_game_nash.1}. The max difference is measured by the difference of defender/attacker's cyber action in two adjacent iterations in Alg.~\ref{alg:0}, i.e., $\max\{ |u_{d,(i+1)} - u_{d,(i)}|, |u_{a,(i+1)} - u_{a,(i)}| \}$, where $i$ represents the $i$-th iteration. 
The corresponding cyber risk admits $\bar{I} = 0.56$, which gives a vulnerable load $P^{LV}_i = \bar{R} \cdot \rho_i = 21$ p.u. for $i \in \mathcal{V}$. It provides an upper bound for the attacker's action during implementing the malicious load manipulation at the physical layer.

\begin{figure}[!t]
    \centering
    \subfigure[]{\includegraphics[width=0.48\columnwidth]{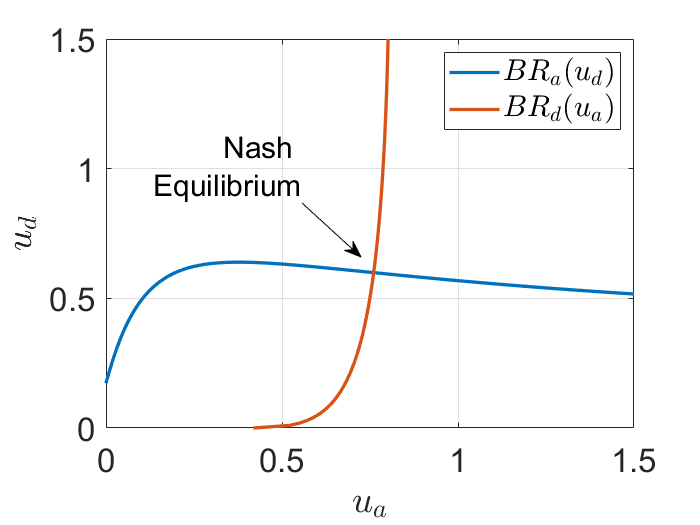}\label{fig:cyber_game_nash.1}}
	\subfigure[]{\includegraphics[width=0.48\columnwidth]{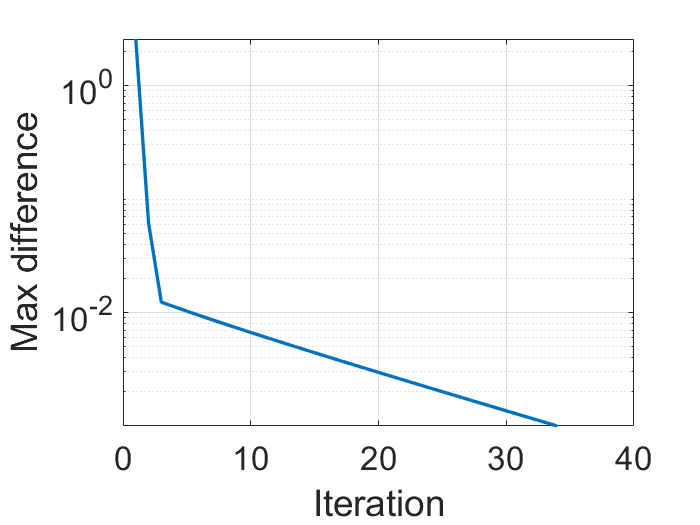}\label{fig:cyber_game_nash.2}}  
\caption[]{(a) depicts the defender and the attacker's optimal response functions. The intersection is the NE. (b) shows that Alg.~\ref{alg:0} can quickly converge to a NE. The max difference is measured by $\max\{ | u_{d,{i+1}}-u_{d,(i)}|, |u_{a,(i+1)}-u_{a,(i)}| \}$.}
\label{fig:cyber_game_nash}
\end{figure}

\subsection{Physical Impact of IoT Botnet Attack} \label{sec:case.physical}

Additional power regulation is critical for the defender to combat botnet attacks.
As depicted in Fig.~\ref{fig:nodefense.1}, the defender can use a pre-designed PI controller to stabilize the generator frequency in the permissible range when there are no attacks. The zoom-in plot shows that all generators' frequencies converge to $\omega_\mathrm{n}$ after some oscillations. However, the PI controller is insufficient to stabilize normal operation when the attacker maliciously manipulates the load.
We consider a load-switching attack, where the attacker turns on $0.9 \cdot P^{LV}_i=152$ p.u. for all $i \in \mathcal{V}$ in $0$-$50$s, then turns off the loads in $50$-$100$s, and again turns on $0.9 \cdot P^{LV}_i$ in $100$-$150$s. The defender only uses a PI controller to regulate the system. As shown in Fig.~\ref{fig:nodefense.2}, some generators' frequencies exceed the maximum permissible frequency deviation range, disrupting the system operation. This shows the necessity of using additional power regulation approaches to improve the grid's physical resilience.

\begin{figure}[!t]
    \centering
    \subfigure[zero attack.]{\includegraphics[width=0.48\columnwidth]{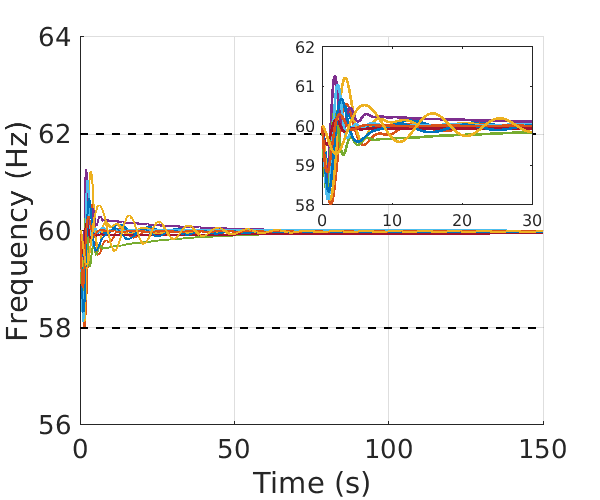}\label{fig:nodefense.1}}
	\subfigure[Load switching attack.]{\includegraphics[width=0.48\columnwidth]{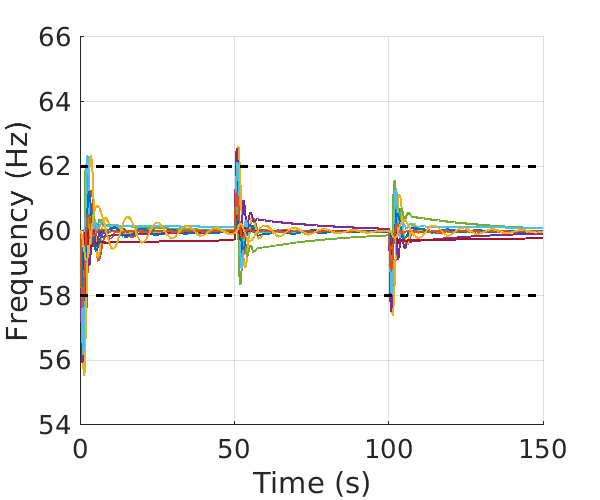}\label{fig:nodefense.2}}  
\caption[]{Generator frequency evolution using a pre-designed PI controller. (a) shows that the PI controller can stabilize the system when there is no attack. (b) implements a load-switching attack and generator frequencies are out of the permissible range, showing a single PI controller is insufficient to stabilize the system under the botnet attack, and it calls for additional resilience enhancement schemes.}
\label{fig:nodefense}
\end{figure}

To this end, the defender at the physical layer uses the min-max controller \eqref{eqn:minmax_1} and receding horizon planning to enable strategic defense against botnet attacks, as developed in Section \ref{sec:physical_design}. We set $Q = \mathrm{diag}([I_{10}, I_{29}, 5\cdot I_{10}])$ and put more penalty weights to stabilize the frequency deviation. We set $Q_f = 5 Q_f$. We set $R_{a,i} = 0.05$ for $i=1,\dots, N_V$ to capture the attacker's low cost of manipulating the load after compromising IoT devices. We set by $R_{d,i} = 0.2$ for $i=1,\dots, N_G$.

We simulate two attack scenarios. In the first scenario, the attacker uses the strategic attack generated from the min-max controller, which can be viewed as the worst-case attack. In the second one, the attacker manipulates a constant load $\mathbf{P}^a = [10.4, 10.6, 9.9, 8.6, 9.5, 19.4, 9.5, 5.9]$ p.u., which is the maximum amount allowed in the first attack scenario. In both scenarios, the attack lasts during $0$-$20$s and terminates ($\mathbf{P}^a = 0$) after $20$s.
Fig.~\ref{fig:attack_ne_none.1} shows the strategic attack in the first scenario. Although the attacker has caused a large frequency deviation for all generators at the beginning, the defender manages to stabilize the system quickly and all generators' frequencies gradually converge to the nominal $\omega_{\mathrm{n}}$. Besides, the system stabilizes more quickly compared with Fig.~\ref{fig:nodefense.1}, showing the advantage of the proposed strategic resilient control.
The results of the second attack scenario are depicted in Fig.~\ref{fig:attack_ne_none.2}. Since the attack action is not strategic, we observe a smaller overshoot in the frequency deviation at around $5$s. A frequency ripple happens around $20$s because the attacker suddenly shuts down all the manipulated load. However, the defender's action can quickly adapt to the load change and stabilize the system. It shows that the min-max controller enables the defender to combat both strategic and non-strategic attacks, significantly improving grid resilience.

\begin{figure}[!t]
    \centering
    \subfigure[Strategic attack.]{\includegraphics[width=0.48\columnwidth]{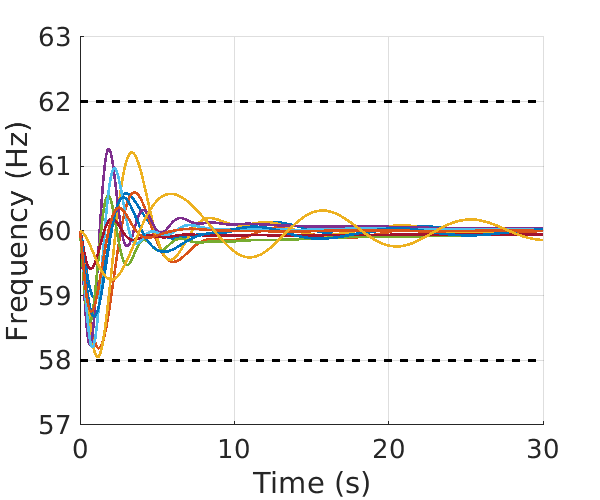}\label{fig:attack_ne_none.1}}
	\subfigure[Constant load attack.]{\includegraphics[width=0.48\columnwidth]{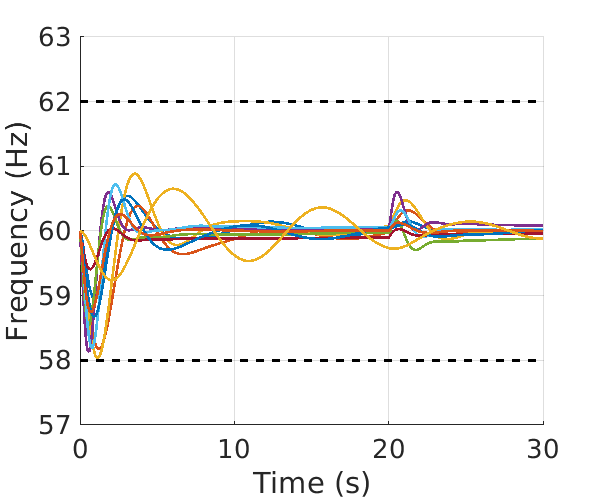}\label{fig:attack_ne_none.2}}  
\caption[]{(a) implements the strategic attack, where the attacker manipulates the load strategically to destabilize the system. (b) shows a constant load attack. The attacker turns on the loads at $t=0$s and shuts them down at $t=20$s. The defender can stabilize the system in both attack scenarios with the designed resilient control.}
\label{fig:attack_ne_none}
\end{figure}

\subsection{Dynamic Defense for Agile Cyber-Physical Resilience} \label{sec:case.dynamic}
We implement a dynamic botnet attack to demonstrate holistic and agile cyber-physical grid resilience under the proposed approach. The dynamic botnet attack is specified as follows.
\begin{itemize}
    \item Attack stage 1: The attacker uses the same setting as in Sec.~\ref{sec:case.cyber} to initiate the botnet attack and perform the strategic attack to the grid for $10$s.
    \item Attack stage 2: The attacker's cyberattack intensity is dropped to $\zeta(u_a) = 1.5\log(u_a) + 0.1$; the attacker performs load switching attack in the grid for $10$s, i.e., turning on $0.9\cdot \mathbf{P}^{LV}$ in the first $5$s and turning them off in the second $5$s.
    \item Attack stage 3: The defender's cyber defense cost is increased to $C_d(u_d) = 0.3 u_d^2$; the attacker performs the strategic attack to the grid for $10$s.
\end{itemize}

The cyber risk $I(t)$ in Fig.~\ref{fig:dyn_attack.1} quickly reaches the steady state in three attacks because the cyber time-scale is much faster than the physical counterpart. $I(t)$ drops at attack stage 2 because the attacker has a weaker attack intensity. It rises in attack stage 3 because of the increase in the defender's cyber defense cost. The steady state cyber risks $\bar{I}$ in all stages are $[0.56, 0.36, 0.46]$. Hence, the systemic risk $\bar{R}$ in all stages are $[166.4, 107.8, 139.8]$ p.u., providing different constraints in physical system regulation.
Fig.~\ref{fig:dyn_attack.2} shows the generator frequency deviation in each attack stage.
We also plot the defender's regulation power $\mathbf{P}^d$ and the attacker's manipulated load $\mathbf{P}^a$ for selected generator and load buses in Fig.~\ref{fig:dyn_action} for better visualization.

\begin{figure}[!t]
    \centering
    \subfigure[Cyber risk evolution.]{\includegraphics[width=0.48\columnwidth]{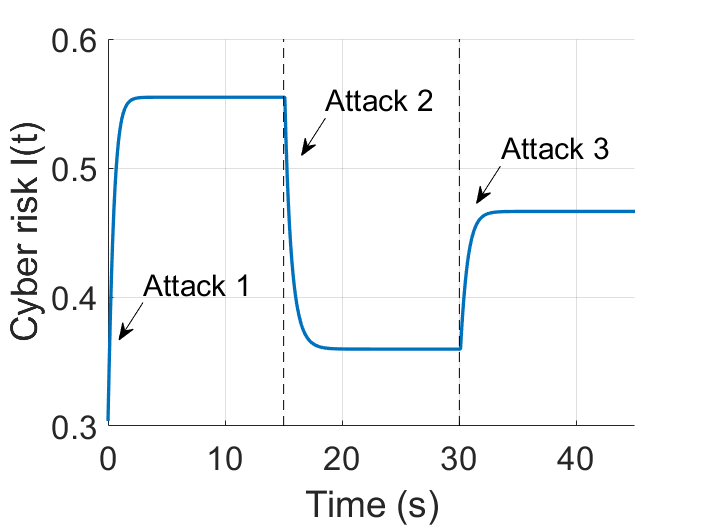}\label{fig:dyn_attack.1}}
	\subfigure[Physical system evolution.]{\includegraphics[width=0.48\columnwidth]{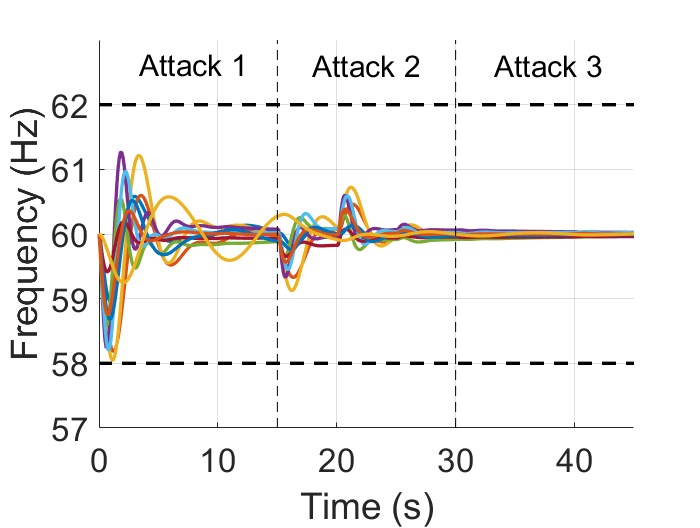}\label{fig:dyn_attack.2}}  
\caption[]{Cyber risk and generator frequency evolution under dynamic attacks. In each attack stage, the total amount of vulnerable load changes because of the variation in cyber defense strategy, which affects the dynamic power regulation schemes at the physical layer. The defender manages to stabilize the grid under all attacks.}
\label{fig:dyn_attack}
\end{figure}

\begin{figure}[!t]
    \centering
    \subfigure[Defender's strategy $\mathbf{P}^d$]{\includegraphics[width=0.48\columnwidth]{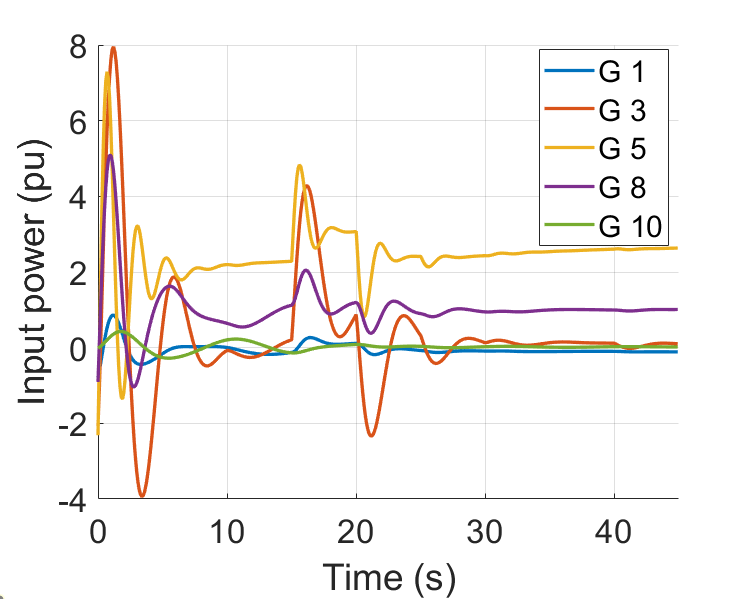}\label{fig:dyn_action.1}}
	\subfigure[Attacker's strategy $\mathbf{P}^a$]{\includegraphics[width=0.48\columnwidth]{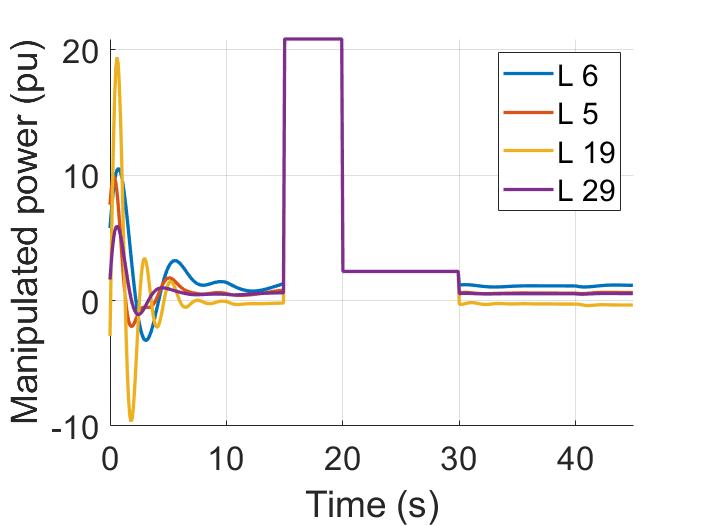}\label{fig:dyn_action.2}}  
\caption[]{(a) Selected defender's resilient control on power regulation and (b) the attacker's load manipulation under the investigated dynamic botnet attacks.} 
\label{fig:dyn_action}
\end{figure}

Fig.~\ref{fig:dyn_attack.2} indicates that the defender can effectively stabilize the system for all attacks. The frequency fluctuates only when an attack starts, and the defender can quickly mitigate the attack, showing the agile resilience of the grid. It is worth noting that once the grid is stabilized, strategic attacks can have little effect in disrupting the grid, which can be observed in the last half of attack stage 1 ($5$-$10$s) and the attack stage 3 ($30$-$40$s) in Fig.~\ref{fig:dyn_action.2}. In attack stage 1, after the system is stabilized (around $5$s), the attacker begins to reduce the manipulation of vulnerable loads despite having larger access to those loads. 
There is a ripple in the grid frequency because of the sudden load change. The operator also changes the regulation strategy to protect the grid, and the frequency quickly converges to $\omega_n$, as shown in attack stage 2. We note that the frequency changes slightly in attack stage 2 compared to attack stage 1. This is because the operator has already used strategic regulation strategies from attack stage 1, which further mitigates the attack consequence.
In attack stage 3, the attacker only maliciously controls a small portion of loads up to $5.33$ p.u. despite the total vulnerable loads being $139.8$ p.u. This is because the defender can easily regulate the grid and maintain its stability. The attacker needs to cause more disruption that exceeds the defender's regulation capacity to affect the grid's normal operation, which is not cost-effective for strategic attackers. In this situation, as shown in attack stage 2, only irrational attacks can disrupt the system. But the defender can quickly stabilize such disturbance using the proposed min-max controller. In summary, our developed control scheme enhances the agile cyber-physical resiliency of the power grids under strategic, non-strategic, and consecutive attacks.

\section{Conclusion}\label{sec:conclusion}
With the universal adoption of IoT-controlled high-power energy devices in households, the cybersecurity of modern power grids is a critical concern. We have investigated the IoT botnet attack in which the adversary controls loads of the grid by manipulating IoT energy devices dynamically. The developed epidemic model has provided a tractable solution to quantify the systemic cyber risks of power grids. The cross-layer game-theoretic cyber defense mechanism and physical resilient control have been shown effective in maintaining the grid's normal operation under the considered strategic botnet attack and hence have improved the grid's integrative resiliency at both the cyber and physical layers. 
For future work, we would conduct detailed device-level simulations to improve attack process resolutions and explore more practical large-scale modeling approaches to characterize and combat botnet attacks in power grids. We would also investigate the scenario when the system operator has unknown information on the IoT botnet attacker's model and objective and develop learning-based cyber-physical resilience countermeasures.

\bibliographystyle{IEEEtran}
\bibliography{IEEEabrv,references}

\begin{thebibliography}{10}
\providecommand{\url}[1]{#1}
\csname url@samestyle\endcsname
\providecommand{\newblock}{\relax}
\providecommand{\bibinfo}[2]{#2}
\providecommand{\BIBentrySTDinterwordspacing}{\spaceskip=0pt\relax}
\providecommand{\BIBentryALTinterwordstretchfactor}{4}
\providecommand{\BIBentryALTinterwordspacing}{\spaceskip=\fontdimen2\font plus
\BIBentryALTinterwordstretchfactor\fontdimen3\font minus \fontdimen4\font\relax}
\providecommand{\BIBforeignlanguage}[2]{{%
\expandafter\ifx\csname l@#1\endcsname\relax
\typeout{** WARNING: IEEEtran.bst: No hyphenation pattern has been}%
\typeout{** loaded for the language `#1'. Using the pattern for}%
\typeout{** the default language instead.}%
\else
\language=\csname l@#1\endcsname
\fi
#2}}
\providecommand{\BIBdecl}{\relax}
\BIBdecl

\bibitem{zografopoulos2021cyber}
I.~Zografopoulos, J.~Ospina, X.~Liu, and C.~Konstantinou, ``Cyber-physical energy systems security: Threat modeling, risk assessment, resources, metrics, and case studies,'' \emph{IEEE Access}, vol.~9, pp. 29\,775--29\,818, 2021.

\bibitem{chen2022cross}
J.~Chen and Q.~Zhu, ``A cross-layer design approach to strategic cyber defense and robust switching control of cyber-physical wind energy systems,'' \emph{IEEE Transactions on Automation Science and Engineering}, vol.~20, no.~1, pp. 624 -- 635, 2023.

\bibitem{meneghello2019iot}
F.~Meneghello, M.~Calore, D.~Zucchetto, M.~Polese, and A.~Zanella, ``Iot: Internet of threats? a survey of practical security vulnerabilities in real iot devices,'' \emph{IEEE Internet of Things Journal}, vol.~6, no.~5, pp. 8182--8201, 2019.

\bibitem{alladi2020consumer}
T.~Alladi, V.~Chamola, B.~Sikdar, and K.-K.~R. Choo, ``Consumer iot: Security vulnerability case studies and solutions,'' \emph{IEEE Consumer Electronics Magazine}, vol.~9, no.~2, pp. 17--25, 2020.

\bibitem{kumari2023comprehensive}
P.~Kumari and A.~K. Jain, ``A comprehensive study of ddos attacks over iot network and their countermeasures,'' \emph{Computers \& Security}, p. 103096, 2023.

\bibitem{antonakakis2017understanding}
M.~Antonakakis, T.~April, M.~Bailey, M.~Bernhard, E.~Bursztein, J.~Cochran, Z.~Durumeric, J.~A. Halderman, L.~Invernizzi, M.~Kallitsis \emph{et~al.}, ``Understanding the mirai botnet,'' in \emph{26th USENIX security symposium (USENIX Security 17)}, 2017, pp. 1093--1110.

\bibitem{soltan2018blackiot}
S.~Soltan, P.~Mittal, and H.~V. Poor, ``{BlackIoT: IoT} botnet of high wattage devices can disrupt the power grid,'' in \emph{27th USENIX Security Symposium (USENIX Security)}, 2018, pp. 15--32.

\bibitem{amini2016dynamic}
S.~Amini, F.~Pasqualetti, and H.~Mohsenian-Rad, ``Dynamic load altering attacks against power system stability: Attack models and protection schemes,'' \emph{IEEE Transactions on Smart Grid}, vol.~9, no.~4, pp. 2862--2872, 2016.

\bibitem{mohsenian2011distributed}
A.-H. Mohsenian-Rad and A.~Leon-Garcia, ``Distributed {I}nternet-based load altering attacks against smart power grids,'' \emph{IEEE Transactions on Smart Grid}, vol.~2, no.~4, pp. 667--674, 2011.

\bibitem{huang2019not}
B.~Huang, A.~A. Cardenas, and R.~Baldick, ``Not everything is dark and gloomy: Power grid protections against $\{$IoT$\}$ demand attacks,'' in \emph{28th USENIX Security Symposium (USENIX Security 19)}, 2019, pp. 1115--1132.

\bibitem{chen2023enhancing}
J.~Chen, ``Enhancing cyber-physical resiliency of power grids under iot-enabled dynamic botnet attacks,'' in \emph{EEE Power \& Energy Society General Meeting (PESGM)}, 2023, pp. 1--5.

\bibitem{sullivan2017cyber}
J.~E. Sullivan and D.~Kamensky, ``How cyber-attacks in {Ukraine} show the vulnerability of the {US} power grid,'' \emph{The Electricity Journal}, vol.~30, no.~3, pp. 30--35, 2017.

\bibitem{bompard2013classification}
E.~Bompard, T.~Huang, Y.~Wu, and M.~Cremenescu, ``Classification and trend analysis of threats origins to the security of power systems,'' \emph{International Journal of Electrical Power \& Energy Systems}, vol.~50, pp. 50--64, 2013.

\bibitem{langner2011stuxnet}
R.~Langner, ``Stuxnet: Dissecting a cyberwarfare weapon,'' \emph{IEEE Security \& Privacy}, vol.~9, no.~3, pp. 49--51, 2011.

\bibitem{liang20162015}
G.~Liang, S.~R. Weller, J.~Zhao, F.~Luo, and Z.~Y. Dong, ``The 2015 {Ukraine} blackout: Implications for false data injection attacks,'' \emph{IEEE Transactions on Power Systems}, vol.~32, no.~4, pp. 3317--3318, 2016.

\bibitem{shekari2021mamiot}
T.~Shekari, C.~Irvene, A.~A. Cardenas, and R.~Beyah, ``{MaMIoT}: Manipulation of energy market leveraging high wattage {IoT} botnets,'' in \emph{Proceedings of the 2021 ACM SIGSAC Conference on Computer and Communications Security}, 2021, pp. 1338--1356.

\bibitem{dange2019iot}
S.~Dange and M.~Chatterjee, ``Iot botnet: The largest threat to the iot network,'' in \emph{Data Communication and Networks: Proceedings of GUCON 2019}.\hskip 1em plus 0.5em minus 0.4em\relax Springer, 2019, pp. 137--157.

\bibitem{chen2020load}
C.~Chen, M.~Cui, X.~Fang, B.~Ren, and Y.~Chen, ``Load altering attack-tolerant defense strategy for load frequency control system,'' \emph{Applied Energy}, vol. 280, p. 116015, 2020.

\bibitem{dabrowski2017grid}
A.~Dabrowski, J.~Ullrich, and E.~R. Weippl, ``Grid shock: Coordinated load-changing attacks on power grids: The non-smart power grid is vulnerable to cyber attacks as well,'' in \emph{Proceedings of the 33rd Annual Computer Security Applications Conference}, 2017, pp. 303--314.

\bibitem{shekari2022madiot}
T.~Shekari, A.~A. Cardenas, and R.~Beyah, ``{MaDIoT} 2.0: Modern high-wattage {IoT} botnet attacks and defenses,'' in \emph{31st USENIX Security Symposium (USENIX Security 22)}, 2022, pp. 3539--3556.

\bibitem{lakshminarayana2022load}
S.~Lakshminarayana, J.~Ospina, and C.~Konstantinou, ``Load-altering attacks against power grids under {COVID}-19 low-inertia conditions,'' \emph{IEEE Open Access Journal of Power and Energy}, vol.~9, pp. 226--240, 2022.

\bibitem{khan2019impact}
O.~G.~M. Khan, E.~El-Saadany, A.~Youssef, and M.~Shaaban, ``Impact of electric vehicles botnets on the power grid,'' in \emph{2019 IEEE Electrical Power and Energy Conference (EPEC)}.\hskip 1em plus 0.5em minus 0.4em\relax IEEE, 2019, pp. 1--5.

\bibitem{pastor2001epidemic}
R.~Pastor-Satorras and A.~Vespignani, ``Epidemic spreading in scale-free networks,'' \emph{Physical review letters}, vol.~86, no.~14, p. 3200, 2001.

\bibitem{pastor2015epidemic}
R.~Pastor-Satorras, C.~Castellano, P.~Van~Mieghem, and A.~Vespignani, ``Epidemic processes in complex networks,'' \emph{Reviews of Modern Physics}, vol.~87, no.~3, p. 925, 2015.

\bibitem{van2008virus}
P.~Van~Mieghem, J.~Omic, and R.~Kooij, ``Virus spread in networks,'' \emph{IEEE/ACM Transactions On Networking}, vol.~17, no.~1, pp. 1--14, 2008.

\bibitem{sahneh2013generalized}
F.~D. Sahneh, C.~Scoglio, and P.~Van~Mieghem, ``Generalized epidemic mean-field model for spreading processes over multilayer complex networks,'' \emph{IEEE/ACM Transactions on Networking}, vol.~21, no.~5, pp. 1609--1620, 2013.

\bibitem{raponi2022fake}
S.~Raponi, Z.~Khalifa, G.~Oligeri, and R.~Di~Pietro, ``Fake news propagation: a review of epidemic models, datasets, and insights,'' \emph{ACM Transactions on the Web (TWEB)}, vol.~16, no.~3, pp. 1--34, 2022.

\bibitem{campan2017fighting}
A.~Campan, A.~Cuzzocrea, and T.~M. Truta, ``Fighting fake news spread in online social networks: Actual trends and future research directions,'' in \emph{2017 IEEE International Conference on Big Data (Big Data)}.\hskip 1em plus 0.5em minus 0.4em\relax IEEE, 2017, pp. 4453--4457.

\bibitem{nowzari2016analysis}
C.~Nowzari, V.~M. Preciado, and G.~J. Pappas, ``Analysis and control of epidemics: A survey of spreading processes on complex networks,'' \emph{IEEE Control Systems Magazine}, vol.~36, no.~1, pp. 26--46, 2016.

\bibitem{omic2009protecting}
J.~Omic, A.~Orda, and P.~Van~Mieghem, ``Protecting against network infections: A game theoretic perspective,'' in \emph{IEEE INFOCOM 2009}.\hskip 1em plus 0.5em minus 0.4em\relax IEEE, 2009, pp. 1485--1493.

\bibitem{wu2017disturbance}
Q.~Wu, D.~Zhang, D.~Liu, F.~Liu, X.~B. Ling, and Z.~Li, ``Disturbance propagation in power system based on an epidemic model,'' in \emph{2017 IEEE Conference on Energy Internet and Energy System Integration (EI2)}.\hskip 1em plus 0.5em minus 0.4em\relax IEEE, 2017, pp. 1--6.

\bibitem{ma2012framework}
H.~Ma, H.~Li, and Z.~Han, ``A framework of frequency oscillation in power grid: Epidemic propagation over social networks,'' in \emph{2012 Proceedings IEEE INFOCOM Workshops}.\hskip 1em plus 0.5em minus 0.4em\relax IEEE, 2012, pp. 67--72.

\bibitem{zhang2023analysis}
H.~Zhang, Y.~Teng, J.~M. Guerrero, P.~Siano, and X.~Sun, ``Analysis of failure propagation in cyber-physical power systems based on an epidemic model,'' \emph{Energies}, vol.~16, no.~6, p. 2624, 2023.

\bibitem{luo2010rapid}
F.~Luo, C.~Wang, J.~Xiao, and S.~Ge, ``Rapid evaluation method for power supply capability of urban distribution system based on {N}-1 contingency analysis of main-transformers,'' \emph{International Journal of Electrical Power \& Energy Systems}, vol.~32, no.~10, pp. 1063--1068, 2010.

\bibitem{wang2013two}
Q.~Wang, J.-P. Watson, and Y.~Guan, ``Two-stage robust optimization for {N}-k contingency-constrained unit commitment,'' \emph{IEEE Transactions on Power Systems}, vol.~28, no.~3, pp. 2366--2375, 2013.

\bibitem{musleh2019survey}
A.~S. Musleh, G.~Chen, and Z.~Y. Dong, ``A survey on the detection algorithms for false data injection attacks in smart grids,'' \emph{IEEE Transactions on Smart Grid}, vol.~11, no.~3, pp. 2218--2234, 2019.

\bibitem{liang2016review}
G.~Liang, J.~Zhao, F.~Luo, S.~R. Weller, and Z.~Y. Dong, ``A review of false data injection attacks against modern power systems,'' \emph{IEEE Transactions on Smart Grid}, vol.~8, no.~4, pp. 1630--1638, 2016.

\bibitem{liu2013denial}
S.~Liu, X.~P. Liu, and A.~El~Saddik, ``Denial-of-service (dos) attacks on load frequency control in smart grids,'' in \emph{2013 ieee pes innovative smart grid technologies conference (isgt)}.\hskip 1em plus 0.5em minus 0.4em\relax IEEE, 2013, pp. 1--6.

\bibitem{chen2019distributed}
W.~Chen, D.~Ding, H.~Dong, and G.~Wei, ``Distributed resilient filtering for power systems subject to denial-of-service attacks,'' \emph{IEEE Transactions on Systems, Man, and Cybernetics: Systems}, vol.~49, no.~8, pp. 1688--1697, 2019.

\bibitem{yang2012man}
Y.~Yang, K.~McLaughlin, T.~Littler, S.~Sezer, E.~G. Im, Z.~Q. Yao, B.~Pranggono, and H.~F. Wang, ``Man-in-the-middle attack test-bed investigating cyber-security vulnerabilities in smart grid scada systems,'' in \emph{International Conference on Sustainable Power Generation and Supply (SUPERGEN 2012)}, 2012, pp. 1--8.

\bibitem{wlazlo2021man}
P.~Wlazlo, A.~Sahu, Z.~Mao, H.~Huang, A.~Goulart, K.~Davis, and S.~Zonouz, ``Man-in-the-middle attacks and defence in a power system cyber-physical testbed,'' \emph{IET Cyber-Physical Systems: Theory \& Applications}, vol.~6, no.~3, pp. 164--177, 2021.

\bibitem{yuan2011modeling}
Y.~Yuan, Z.~Li, and K.~Ren, ``Modeling load redistribution attacks in power systems,'' \emph{IEEE Transactions on Smart Grid}, vol.~2, no.~2, pp. 382--390, 2011.

\bibitem{liu2014local}
X.~Liu and Z.~Li, ``Local load redistribution attacks in power systems with incomplete network information,'' \emph{IEEE Transactions on Smart Grid}, vol.~5, no.~4, pp. 1665--1676, 2014.

\bibitem{xiang2017game}
Y.~Xiang and L.~Wang, ``A game-theoretic study of load redistribution attack and defense in power systems,'' \emph{Electric Power Systems Research}, vol. 151, pp. 12--25, 2017.

\bibitem{lakshminarayana2021analysis}
S.~Lakshminarayana, S.~Adhikari, and C.~Maple, ``Analysis of {IoT}-based load altering attacks against power grids using the theory of second-order dynamical systems,'' \emph{IEEE Transactions on Smart Grid}, vol.~12, no.~5, pp. 4415--4425, 2021.

\bibitem{chi2021game}
C.~Chi, Y.~Wang, X.~Tong, M.~Siddula, and Z.~Cai, ``Game theory in internet of things: A survey,'' \emph{IEEE Internet of Things Journal}, vol.~9, no.~14, pp. 12\,125--12\,146, 2021.

\bibitem{tushar2023survey}
W.~Tushar, C.~Yuen, T.~K. Saha, S.~Nizami, M.~R. Alam, D.~B. Smith, and H.~V. Poor, ``A survey of cyber-physical systems from a game-theoretic perspective,'' \emph{IEEE Access}, vol.~11, pp. 9799--9834, 2023.

\bibitem{do2017game}
C.~T. Do, N.~H. Tran, C.~Hong, C.~A. Kamhoua, K.~A. Kwiat, E.~Blasch, S.~Ren, N.~Pissinou, and S.~S. Iyengar, ``Game theory for cyber security and privacy,'' \emph{ACM Computing Surveys (CSUR)}, vol.~50, no.~2, pp. 1--37, 2017.

\bibitem{wei2016stochastic}
L.~Wei, A.~I. Sarwat, W.~Saad, and S.~Biswas, ``Stochastic games for power grid protection against coordinated cyber-physical attacks,'' \emph{IEEE Transactions on Smart Grid}, vol.~9, no.~2, pp. 684--694, 2016.

\bibitem{gao2020modeling}
B.~Gao and L.~Shi, ``Modeling an attack-mitigation dynamic game-theoretic scheme for security vulnerability analysis in a cyber-physical power system,'' \emph{IEEE Access}, vol.~8, pp. 30\,322--30\,331, 2020.

\bibitem{hasan2020game}
S.~Hasan, A.~Dubey, G.~Karsai, and X.~Koutsoukos, ``A game-theoretic approach for power systems defense against dynamic cyber-attacks,'' \emph{International Journal of Electrical Power \& Energy Systems}, vol. 115, p. 105432, 2020.

\bibitem{hyder2020optimization}
B.~Hyder and M.~Govindarasu, ``Optimization of cybersecurity investment strategies in the smart grid using game-theory,'' in \emph{2020 IEEE Power \& Energy Society Innovative Smart Grid Technologies Conference (ISGT)}.\hskip 1em plus 0.5em minus 0.4em\relax IEEE, 2020, pp. 1--5.

\bibitem{baron2017resilient}
E.~Baron-Prada, E.~Osorio, and E.~Mojica-Nava, ``Resilient transactive control in microgrids under dynamic load altering attacks,'' in \emph{2017 IEEE 3rd Colombian conference on automatic control (CCAC)}.\hskip 1em plus 0.5em minus 0.4em\relax IEEE, 2017, pp. 1--5.

\bibitem{guo2021reinforcement}
Y.~Guo, L.~Wang, Z.~Liu, and Y.~Shen, ``Reinforcement-learning-based dynamic defense strategy of multistage game against dynamic load altering attack,'' \emph{International Journal of Electrical Power \& Energy Systems}, vol. 131, p. 107113, 2021.

\bibitem{liu2020defense}
Z.~Liu and L.~Wang, ``Defense strategy against load redistribution attacks on power systems considering insider threats,'' \emph{IEEE Transactions on Smart grid}, vol.~12, no.~2, pp. 1529--1540, 2020.

\bibitem{zhao2014design}
C.~Zhao, U.~Topcu, N.~Li, and S.~Low, ``Design and stability of load-side primary frequency control in power systems,'' \emph{IEEE Transactions on Automatic Control}, vol.~59, no.~5, pp. 1177--1189, 2014.

\bibitem{dagon2006modeling}
D.~Dagon, C.~C. Zou, and W.~Lee, ``Modeling botnet propagation using time zones.'' in \emph{Network and Distributed System Security Symposium (NDSS)}, vol.~6, 2006, pp. 2--13.

\bibitem{bacsar1998dynamic}
T.~Ba{\c{s}}ar and G.~J. Olsder, \emph{Dynamic Noncooperative Game Theory}.\hskip 1em plus 0.5em minus 0.4em\relax SIAM, 1998.

\bibitem{wang2020multi}
S.~Wang, R.~Diao, C.~Xu, D.~Shi, and Z.~Wang, ``On multi-event co-calibration of dynamic model parameters using soft actor-critic,'' \emph{IEEE Transactions on Power Systems}, vol.~36, no.~1, pp. 521--524, 2020.

\bibitem{xie2020imitation}
J.~Xie, Z.~Ma, K.~Dehghanpour, Z.~Wang, Y.~Wang, R.~Diao, and D.~Shi, ``Imitation and transfer q-learning-based parameter identification for composite load modeling,'' \emph{IEEE Transactions on Smart Grid}, vol.~12, no.~2, pp. 1674--1684, 2020.

\bibitem{wang2021review}
Y.~Wang, B.~Chai, W.~Lu, and X.~Zheng, ``A review of deep reinforcement learning applications in power system parameter estimation,'' in \emph{2021 International Conference on Power System Technology (POWERCON)}.\hskip 1em plus 0.5em minus 0.4em\relax IEEE, 2021, pp. 2015--2021.

\bibitem{statista2023number}
\BIBentryALTinterwordspacing
S.~R. Department, ``Number of users of smart homes in the united states from 2019 to 2028,'' 2023, accessed: Nov 23, 2023. [Online]. Available: \url{https://www.statista.com/forecasts/887611/number-of-smart-homes-in-the-smart-home-market-in-the-united-states}
\BIBentrySTDinterwordspacing

\bibitem{mariotti2023smart}
\BIBentryALTinterwordspacing
T.~Mariotti, ``Smart home statistics (2023),'' 2023, accessed: Nov 23, 2023. [Online]. Available: \url{https://www.rubyhome.com/blog/smart-home-stats/}
\BIBentrySTDinterwordspacing

\end{thebibliography}

\end{document}